\let\csname equation*\endcsname\relax
\let\csname endequation*\endcsname\relax
\theoremstyle{plain}
\newtheorem{thm}{Theorem}%[section]
\newtheorem{conj}{Conjecture}%[section]
\newtheorem{prop}{Proposition}%[section]
\newtheorem{mydef}{Definition}%[section]
\newtheorem{quest}{Question}
\newenvironment{questbis}[1]
  {%
   \addtocounter{quest}{-1}%
   \begin{quest}}
  {\end{quest}}
\theoremstyle{remark}
\newcommand{\Hi}{\mathscr{H}}
\newcommand{\numberset}{\mathbb}
\newcommand{\R}{\numberset{R}} 
\newcommand{\C}{\numberset{C}}
\newcommand{\ud}{\,\mathrm{d}}
\newcommand{\LOCC}{\operatorname{LOCC}}
\newcommand{\be}{\begin{equation}}
\newcommand{\ee}{\end{equation}}
\DeclareMathAlphabet{\mathcalligra}{T1}{calligra}{m}{n}
\begin{document}
\title{Volume of the set of LOCC-convertible quantum states}

\author{Fabio Deelan Cunden$^{1}$, Paolo Facchi$^{2,3}$, Giuseppe Florio$^{3,4}$, Giovanni Gramegna$^{2,3}$}
\address{$^{1}$ School of Mathematics and Statistics, University College Dublin, Dublin 4, Ireland }
\address{$^{2}$ Dipartimento di Fisica and MECENAS, Università di Bari, I-70126 Bari, Italy }
\address{$^{3}$ INFN, Sezione di Bari, I-70126 Bari, Italy}
\address{$^{4}$ Dipartimento di Meccanica, Matematica e Management, Politecnico di Bari, Via E. Orabona 4, I–70125 Bari, Italy}

    \begin{abstract}
The class of quantum operations known as Local Operations and Classical Communication (LOCC) induces a partial ordering on quantum states. We present the results of systematic numerical computations related to the volume (with respect to the unitarily invariant measure) of the set of LOCC-convertible bipartite pure states, where the ordering is characterised by an algebraic relation known as majorization.
The numerical results, which exploit a tridiagonal model of random matrices, provide quantitative evidence that the proportion of LOCC-convertible pairs vanishes in the limit of large dimension, and therefore support a previous conjecture by Nielsen. In particular, we show that the problem is equivalent to the persistence of a non-Markovian stochastic process and the proportion of LOCC-convertible pairs decays algebraically with a nontrivial persistence exponent. We extend this analysis by investigating the distribution of the maximal success probability of LOCC-conversions. We show a dichotomy in behaviour between balanced and unbalanced bipartitions. In the 
latter case the asymptotics is somehow surprising: in the limit of large dimensions, for the overwhelming majority of pairs of states a perfect LOCC-conversion is not possible; nevertheless, for most states there exist local strategies that succeed in achieving the conversion with a probability arbitrarily close to one. We present strong evidences of a universal scaling limit for the maximal probability of successful LOCC-conversions and we suggest a connection with the typical fluctuations of the smallest eigenvalue of Wishart random matrices. 
    \end{abstract}
\maketitle
%\tableofcontents % Prints the main table of contents
%
%\newpage
\section{Introduction and summary of results}

The theoretical and experimental development of quantum information protocols has motivated the study of `manipulations' of quantum states using classes of operations restricted by some physical constraints. 
When considering systems composed of two parties $A$ and $B$ (or more), the physical constraint corresponds to the paradigm of `distant labs': $A$ and $B$ can only perform local unitaries and measurements, and can freely communicate classical data, which includes the results of their local measurements. The subset of quantum operations describing this scenario is the so-called class of \emph{Local Operations assisted by Classical Communication} (LOCC). In the language of resource theory, LOCC correspond to those operations which can be implemented without consuming entanglement.   
Despite having a quite simple physical description, the mathematical characterization of LOCC is notoriously challenging~\cite{Bennet99,DHR02}, due to the fact that the particular choice of local measurements to perform at some stage of a LOCC protocol depends on all the measurement results previously obtained (and shared between the parties).
For a precise mathematical description, see~\cite{Chitambar2014}. 

We use the following convention throughout the paper.  The density matrix $\ketbra{\psi}{\psi}$ of the pure state $\ket{\psi}$, with $\bra{\psi}\psi\rangle=1$, will be frequently written simply as $\psi$, namely
\begin{equation}
\psi = \ketbra{\psi}{\psi}.
\end{equation}
The reduced density matrix  
of a bipartite state will be denoted by 
\begin{equation}
\psi_A = \Tr_B(\psi),
\end{equation}
and we write $\lambda(\psi_A)$ to denote the spectrum of $\psi_A$, i.e.\ the set of eigenvalues of $\psi_A$.

Consider a system composed of two parties $A$ and $B$. We say that a state $\ket{\psi}$ can be locally converted (or LOCC-converted) into a state $\ket{\varphi}$ if there is a LOCC operation that maps $\psi=\ketbra{\psi}{\psi}$ into $\varphi=\ketbra{\varphi}{\varphi}$. 
Recall that in the bipartite case such a transformation can always be completed in one round of LOCC if it is possible at all~\cite{Lo2001}. See~\ref{appLOCC}.
Here is a generalisation to include local conversions that succeed with some probability $p$ at transforming the initial state into the wished final state. 
\begin{mydef}\label{def:LOCCp}
	Let $0\leq p \leq 1$. We say that $\ket{\psi}$ can be locally converted into $\ket{\varphi}$ with probability of success $p$, and write $\ket{\psi}\xrightarrow{p}\ket{\varphi}$, if there exists a LOCC protocol which yields the state $\varphi$ with probability $p$ when performed on the starting state $\psi$. For a given pair $\ket{\psi}$ and $\ket{\varphi}$, we denote by 
	\begin{equation}
	\Pi(\psi\rightarrow \varphi)
	\end{equation}
	the maximal  $p$ such that $\ket{\psi}\xrightarrow{p} \ket{\varphi}$. We will write $\ket{\psi}\rightarrow\ket{\varphi}$ to say that the conversion can be carried out deterministically, i.e.\  $\Pi(\psi\rightarrow \varphi) =1$.
\end{mydef}
For the readers who are not familiar with these notions, we provide some details on the probability of success associated to a LOCC protocol in~\ref{appLOCC}.

\par
In the present exploratory numerical work we intend to discuss a series of basic questions related to the `volume' of the set of LOCC-convertible quantum states.

\begin{quest}\label{q:1} Pick two states $\ket{\psi}$ and $\ket{\varphi}$ at random. What is the probability that $\ket{\psi}\rightarrow\ket{\varphi}$  (i.e.\ that $\Pi\left(\psi \rightarrow \varphi \right)=1$)?

\end{quest}

If the probability measure is unitarily invariant, the question can be rephrased in terms of volumes of the set of convertible states.
 \par
Nielsen conjectured  that, \emph{`in the limit where $A$ and $B$ are of large dimensionality, almost all pairs of pure states picked according to the unitarily invariant measure of $AB$ will be incomparable'}~\cite{nielsen1999}. 
Nielsen's conjecture is still open.
\par

\par
Questions about the volume of LOCC-convertible states can be relaxed to include local convertibility that succeed with some probability $p$ according to Definition~\ref{def:LOCCp}.

\begin{quest}\label{q:2} Pick two states $\ket{\psi}$ and $\ket{\varphi}$  at random. Given $p\in[0,1]$, what is the probability that $\ket{\psi}\xrightarrow{p}\ket{\varphi}$ (i.e.\ that $\Pi\left(\psi \rightarrow \varphi \right)\geq p$)?
\end{quest}

Question~\ref{q:1} and Question~\ref{q:2} can be rephrased more generally as follows.

\begin{quest}\label{q:3} Pick two states $\ket{\psi}$ and $\ket{\varphi}$ at random.
Find the probability distribution of $\Pi\left(\psi \rightarrow \varphi \right)$. 
\end{quest}

In this work we address the above three questions.
The investigation of local convertibility of quantum states has led to precise criteria based on the \emph{majorization}~\cite{nielsen1999,vidal1999,vidal2000entanglement,nielsen2001majorization}, a relation in the technical mathematical sense that formalizes the idea that a probability distribution can be more `disordered' than another~\cite{MOAinequalities,Bhatia}.  
Remarkably, the linear-algebraic theory of majorization appears to play a key role in other resource theories, such as in the theory of noisy operations~\cite{horodecki2003reversible,gour2015resource}, of total quantum coherence~\cite{yang2018operational}, and of thermal operations~\cite{horodecki2013fundamental,brandao2013resource}. 
Hereafter we let $\ket{\psi}$ and $\ket{\varphi}$ be independent random pure states distributed according to the unitarily invariant measure on the unit sphere of $\Hi=\Hi_A\otimes\Hi_B$, where $\Hi_A=\C^n$ and $\Hi_B=\C^m$.  It is known that, since only local unitary transformations are allowed on $A$ and $B$, and these are locally reversible, any two states with the same Schmidt coefficients are locally equivalent. Thus only the Schmidt coefficients of $\ket{\psi}$ and $\ket{\varphi}$ are relevant as far as local operations are concerned. 
To be more precise, let $\lambda(\psi_A)$ and $\lambda(\varphi_A)$ be the \textit{entanglement spectra} of $\ket{\psi}$ and $\ket{\varphi}$, i.e.\ the $n$-tuples of eigenvalues (squares of the Schmidt coefficients) of the reduced density matrices of $\ket{\psi}$ and $\ket{\varphi}$, respectively. Vidal~\cite{vidal1999} discovered that the maximal probability of local conversion depends on the entanglement spectra only and can be written as 
\begin{equation}
\Pi\left(\psi \rightarrow \varphi \right)=\Pi(\lambda(\psi_A),\lambda(\varphi_A)),
\end{equation} 
where $\Pi(\cdot,\cdot)$ is an explicit function defined, for any pair of probability vectors $x$ and $y$ with $n$ nonzero components, as

\be
\Pi(x,y)=\min_{1\leq k\leq n}\left\{\frac{x^\downarrow_k+x^\downarrow_{k+1}+\cdots +x^\downarrow_n}{y^\downarrow_k+y^\downarrow_{k+1}+\cdots +y^\downarrow_n}\right\},
\ee 
where $x^{\downarrow}$ and $y^{\downarrow}$ are the decreasing rearrangements of $x$ and $y$, respectively. The properties of $\Pi(x,y)$ and its algebraic significance will be discussed later.

We can, therefore, give an equivalent formulation of Question~\ref{q:3} as follows.
\begin{questbis}{q:3} 
\label{q:3b} Pick two states $\ket{\psi}$ and $\ket{\varphi}$ at random.
Find the probability distribution of $\Pi(\lambda(\psi_A),\lambda(\varphi_A))$.
\end{questbis}
We also remark at this stage that the Schmidt decomposition is \emph{symmetric} under the interchange of $A$ and $B$; hence, without loss of generality we can assume $n\leq m$.

\subsection{Numerical Methods}

An ordinary way to estimate the probability of getting $\Pi\left(\psi \rightarrow \varphi \right)=1$ or, more generally, the distribution of $\Pi\left(\psi \rightarrow \varphi \right)$, would be to sample independent pairs of states $\ket{\psi},\ket{\varphi}$, compute their local spectra $\lambda(\psi_A)$ and $\lambda(\varphi_A)$, and then compute $\Pi(\lambda(\psi_A),\lambda(\varphi_A))$.
 This method is reasonable in the estimation problem for small dimensions $n$ and $m$. 
One of the bottle-necks of the procedure is that the dimension of the total Hilbert space $\Hi$ is $nm$: this is a severe limitation when sampling the  states $\ket{\psi}$ and $\ket{\varphi}$ if $n$ or $m$ are large (the asymptotic regime of interest here).  

This problem can be bypassed by noticing that 
the local spectra $\lambda(\psi_A)$ and $\lambda(\varphi_A)$ are independent and distributed according to a close relative of the celebrated Laguerre unitary ensemble (eigenvalues of complex Wishart matrices of size $n$ and parameter $m$) well studied in the theory of random matrices~\cite{mehta2004random,Vivo}. A notable result by Dumitriu and Edelman~\cite{edel2002mat} implies that \emph{for any $m$}, the Laguerre unitary ensemble can be generated using a tridiagonal method that only requires $\Or(n)$ random real numbers. Using a simple adaptation of Dumitriu and Edelman tridiagonal method, we sampled independent pairs of local spectra $\lambda(\psi_A),\lambda(\varphi_A)$ (without sampling $\ket{\psi}$ and $\ket{\varphi}$!), and then computed 
$\Pi(\lambda(\psi_A),\lambda(\varphi_A))$. This shortcut made possible numerical calculations for dimensions $n$ and $m=c n$ with $n=2,\dots,1024$, and $c=1,\dots,10^3$. Simulations for such large values of $n$ and $m$ by using the ordinary method are impractical. 
\par

\subsection{Summary of Results}
Our findings are the following.
\begin{enumerate}
\item For fixed $n$, numerical computations show that the sequence $P\left(\ket{\psi}\rightarrow\ket{\varphi}\right)$ is non-decreasing in $m$. Therefore, as $m\to\infty$, the probability of local conversion converges to a positive constant:
\begin{equation}
\lim_{m\to\infty}P\left(\ket{\psi}\rightarrow\ket{\varphi}\right)=\kappa(n)>0.
\end{equation}
The sequence of constants $\kappa(n)$ is decreasing in $n$. See \figurename{ \ref  {fig:Nielsen}(a)}.
The result is of course symmetric under exchange of $n$ and $m$.
\item Suppose that both $n,m\to \infty$. We verified numerically that  the relative volume of pairs of LOCC-convertible states goes to zero
\begin{equation}
\lim_{n,m\to\infty}P\left(\ket{\psi}\rightarrow\ket{\varphi}\right)=0.
\end{equation}
(This is a precise statement of Nielsen's conjecture.) See \figurename{ \ref  {fig:Nielsen}(b)}.

\begin{figure}[t]
	\centering
	\begin{subfigure}[t]{0.49\textwidth}
		\centering
		\includegraphics[width=\textwidth]{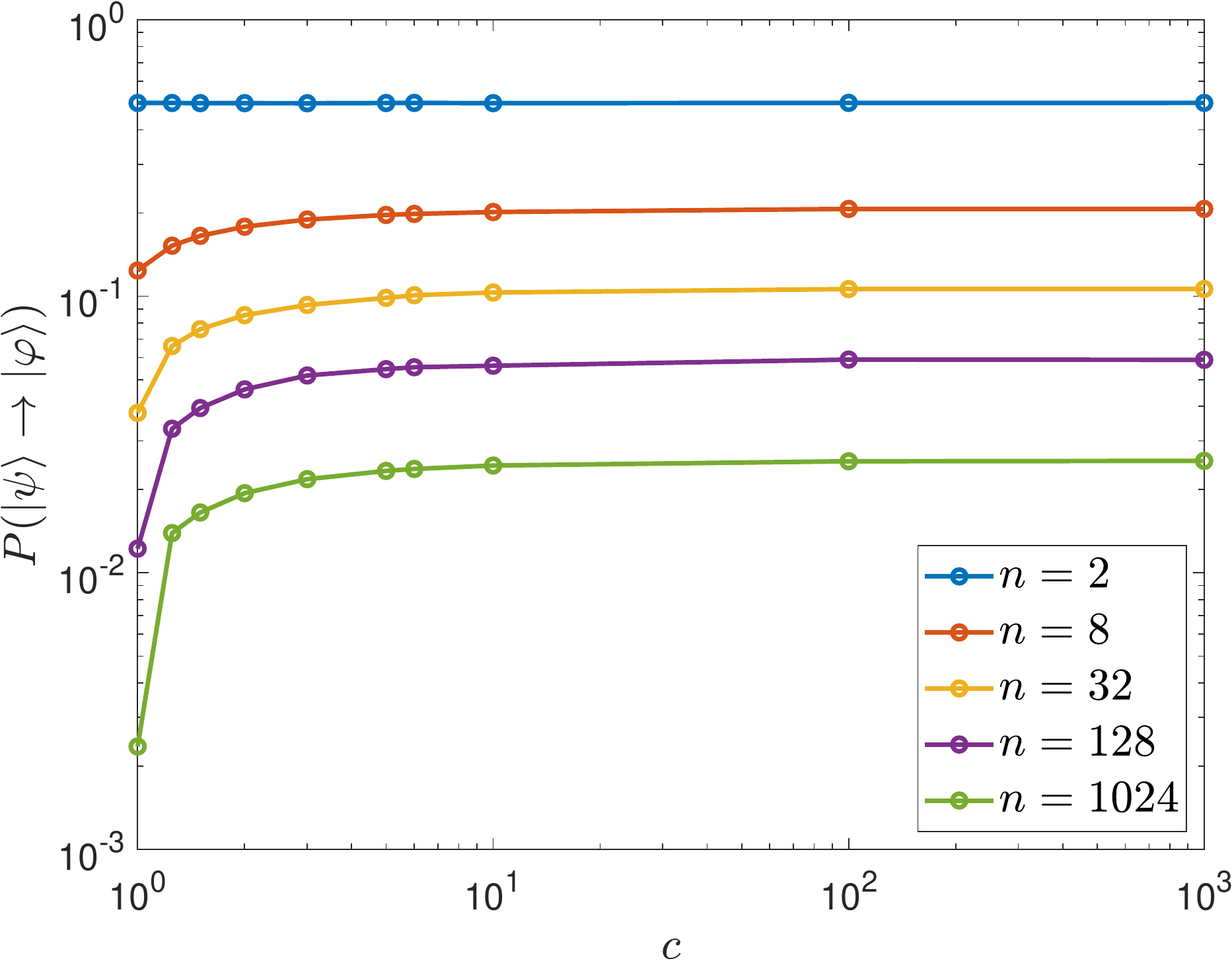}
		\caption{}
		\label{fig:NielsenUnb}
	\end{subfigure}~
	\begin{subfigure}[t]{0.49\textwidth}
		\centering
		\includegraphics[width=\textwidth]{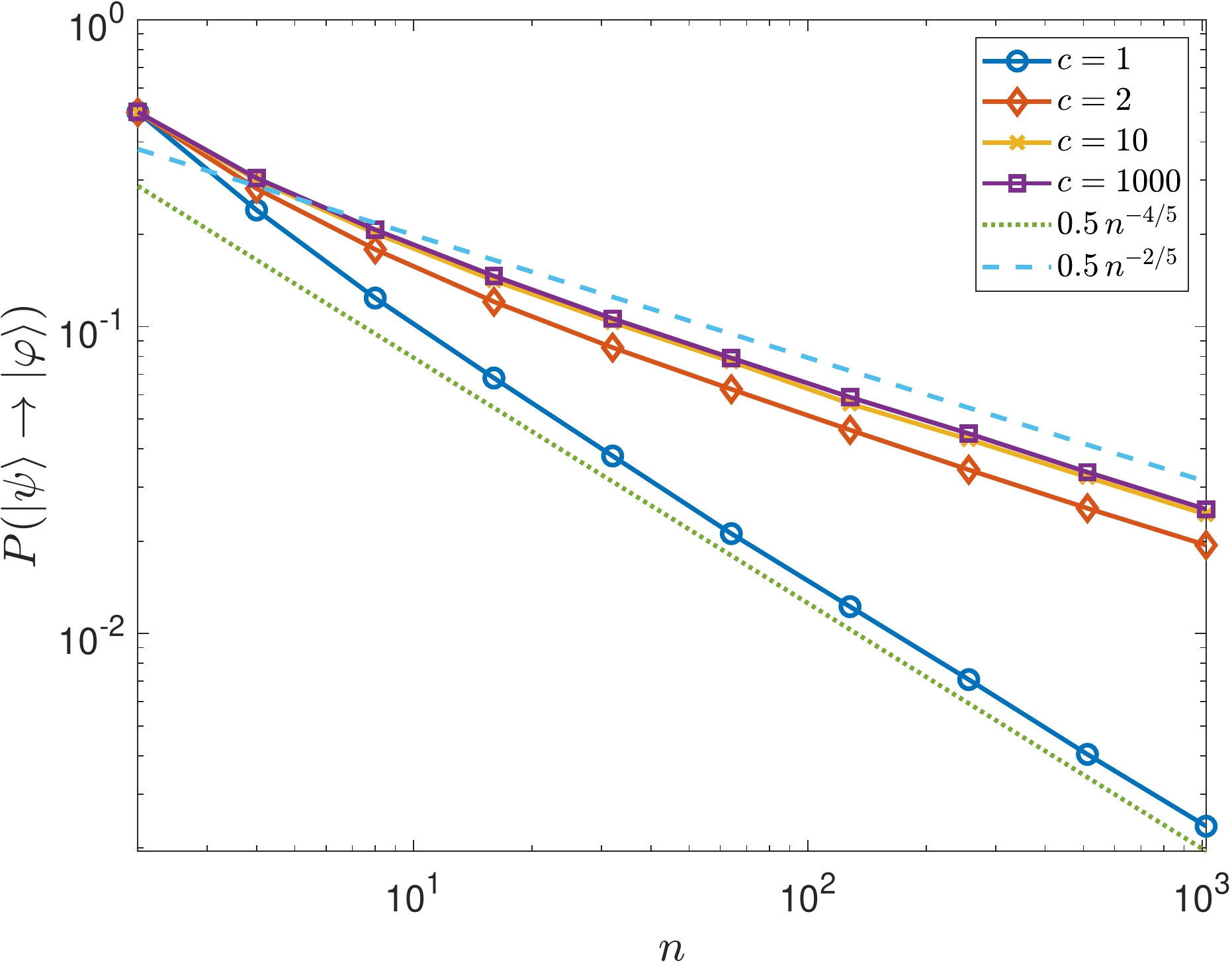}
		\caption{}
		\label{fig:NielsenDim}
	\end{subfigure}
	\caption{The probability $P\left(\ket{\psi}\rightarrow\ket{\varphi}\right)$ for random pure states $\psi$ and $\varphi$  versus the parameter $c$ for several values of $n$ (a), and versus the dimension $n$ for several values of the parameter $c=m/n$ (b).}
	\label{fig:Nielsen}
\end{figure}%
\item We can analyse the asymptotic rate at which the probability of local conversion decays to zero.
More precisely, we consider the limit $n,m\to \infty$ with the ratio $\frac{m}{n}=c$ fixed. We can assume $c\geq1$. In fact, the results are symmetric with respect to the exchange of $n$ and $m$, i.e.\ the transformation $c\leftrightarrow{c}^{-1}$ (the fixed point $c=1$ of the transformation is indeed quite special as discussed later). The analysis of the problem reveals a mapping to the calculation of the persistence probability  for a non-Markovian random walk (see Section~\ref{sec:incomparability}). This connection, combined with numerical evidence (see \figurename{ ~\ref{fig:Nielsen}}), suggests that the probability of local conversion decays \emph{algebraically} (as a power law) at large $n$. Indeed, we find numerically that
\begin{equation}
P\left(\ket{\psi}\rightarrow\ket{\varphi}\right)=P\left(\Pi\left(\psi \rightarrow \varphi \right)=1\right)\simeq \frac{b}{n^{\theta}},\quad\text{as $n\to\infty$ with $m=cn$}.
\end{equation}
A few values of the so-called `persistence exponent' $\theta$  and prefactor $b$ are shown in \tablename{ \ref  {tab:table1}}.

  \begin{table}[h!]
  \begin{center}
    \setlength{\tabcolsep}{1.2em} % for the horizontal padding
{\renewcommand{\arraystretch}{1.3}
    \begin{tabular}{ccc} 
 % <-- Alignments: 1st column left, 2nd middle and 3rd right, with vertical lines in between
      $c$ & $\theta$ & $b$\\
       \hline   \hline
		$1$		& $0.795\pm0.013$ 	& $0.579\pm 0.041$\\
   		$3/2$	& $0.420\pm 0.006$ 	& $0.302\pm 0.010$\\
   		$2$ 	& $0.418\pm0.006$ 	& $0.348\pm 0.010$\\
      	$5$ 	& $0.409\pm0.005$ 	& $0.394\pm 0.011$\\
  		$10$ 	& $0.400\pm0.005$ 	& $0.391\pm 0.010$\\
   		$100$ 	& $0.408\pm0.005$ 	& $0.423\pm 0.011$\\
   		$1000$ 	& $0.405\pm0.005$ 	& $0.422\pm 0.011$
    \end{tabular}
    }
  \end{center}
  
\caption{Parameters $b$ and $\theta$ of the best fitting curve for the probability of local conversion, $P(\ket{\psi}\rightarrow \ket{\varphi})\simeq b/n^{\theta}$. The fit is taken only over the last 4 points of each curve in \figurename{ \ref  {fig:Nielsen}(b)}.}
\label{tab:table1}
\end{table} 

Hence, we have that the persistence exponent is $\theta\simeq 4/5$ if $c=1$, and $\theta\simeq 2/5$ if $c>1$. 
\item We computed numerically the probability distribution 
\begin{equation}\label{eq:F(p)}
F(p)=P\left(\Pi\left(\psi \rightarrow \varphi \right)\leq p\right)
\end{equation}
for several values of $n$ and $m$.
It turns out that the probability density $f(p)=F'(p)$ of $\Pi\left(\psi \rightarrow \varphi \right)$ has  a continuous part (supported on the interval $0\leq p \leq1$) and a singular part at $p=1$: the non-decreasing function $F(p)$ has a jump at $p=1$. The height of the jump $F(1)-F(1^-)=P\left(\Pi\left(\psi \rightarrow \varphi \right)=1\right)$ vanishes in the limit $n\to\infty$ with $m=cn$ (see item (ii)). Amusingly, while the weight of singular part at $p=1$ vanishes for large $n$, the continuous part of the density concentrates around (on the left of) $p=1$. In formulae, we found evidence that for $m=cn$ ($c>1$ fixed),
\begin{equation}
\lim_{n\to\infty}F(p)=\theta(p-1),
\label{eq:conj_F}
\end{equation}
{where $\theta(x)$ is the Heaviside step function, defined as $\theta(x)=1$ for $x\geq 0$ and zero otherwise.	
}
\item We can substantiate the previous point by an exact calculation for $n=2$ and generic $m\geq2$.
We have 
\begin{equation}
f(p)= f_{\mathrm{cont}}(p)+\frac{1}{2}\delta(1-p),
\label{eq:n=2Vid}
\end{equation}
with
\begin{equation}
f_{\mathrm{cont}}(p)=\int_{0}^{1/2}xq_{2,m}(x)q_{2,m}(px)\ud x,\quad q_{2,m}(x)=\frac{\Gamma(2m)}{\Gamma(m)\Gamma(m-1)}(x-x^2)^{m-2}(1-2x)^2,
\label{eq:n=2Vid2}
\end{equation}
where $\Gamma(z)=\int_0^\infty t^{z-1}e^{-t}\ud t$ is the Euler $\Gamma$ function. 
Of course, $f_{\mathrm{cont}}(p)\geq0$ and $\int_{0}^1 f_{\mathrm{cont}}(p)dp=1/2$. In fact, $ f_{\mathrm{cont}}(p)$ is a polynomial in $p$ of degree $2m-2$. 
By analysing the exact formulae~\eref{eq:n=2Vid}--\eref{eq:n=2Vid2}, we see that $\lim_{m\to\infty}f_{\mathrm{cont}}(p)=0$ if $p<1$ and $\lim_{m\to\infty}f_{\mathrm{cont}}(1)=+\infty$. In fact, it can be shown that $\lim_{m\to\infty}f_{\mathrm{cont}}(p)=\frac{1}{2}\delta(1-p)$. Hence, 
\begin{equation}
\lim_{m\to\infty}F(p)=\theta(p-1),\quad\text{for $n=2$}.
\end{equation} 
Numerical simulations suggest that this is the case for any $n\geq2$ (this statement is stronger than~\eqref{eq:conj_F}).  Again the result is symmetric under exchange of $n$ and $m$. This convergence in distribution would imply convergence in probability 
\be
P\left(\Pi\left(\psi \rightarrow \varphi \right)\leq 1-\epsilon\right)\to0,\quad\text{for every $\epsilon>0$},
\ee
as $m\to\infty$ or $n\to\infty$ (or both with $c>1$).
\item At this point one wishes to understand if a rescaling of the variable $\Pi\left(\psi \rightarrow \varphi \right)$ leads to a nontrivial limit in distribution for large $n$ and $m$. Indeed we found numerically that when $m=cn$ with $c>1$ fixed,
\begin{equation}
\lim_{n\to\infty}P\left(\Pi\left(\psi \rightarrow \varphi \right)> 1-\frac{x}{c^{1/6}\left|1-\sqrt{c}\right|^{2/3}n^{2/3}}\right)=H_{\mathrm{unb}}(x),
\end{equation}
for some nontrivial scaling functions $H_{\mathrm{unb}}(x)$. We can equivalently write for the probability distribution
\begin{equation}
\lim_{n\to\infty}F\left(1-\frac{x}{c^{1/6}\left|1-\sqrt{c}\right|^{2/3}n^{2/3}}\right)=1-H_{\mathrm{unb}}(x).
\label{eq:scaling_unb}
\end{equation}
For $c=1$ (the fixed point of the symmetry $c\leftrightarrow c^{-1}$) we observe instead a simpler asymptotics
\begin{equation}
\lim_{n\to\infty}F(1-x)=1-H_{\mathrm{bal}}(x).
\label{eq:scaling_bal}
\end{equation}
See \figurename{ \ref {fig:VidalRescaledintro}}. Precise expressions for $H_{\mathrm{unb}}(x)$ and $H_{\mathrm{bal}}(x)$ remain beyond reach.

\begin{figure}[t!]
	\centering
		\begin{subfigure}[t]{0.49\textwidth}
		\centering
		\includegraphics[width=\textwidth]{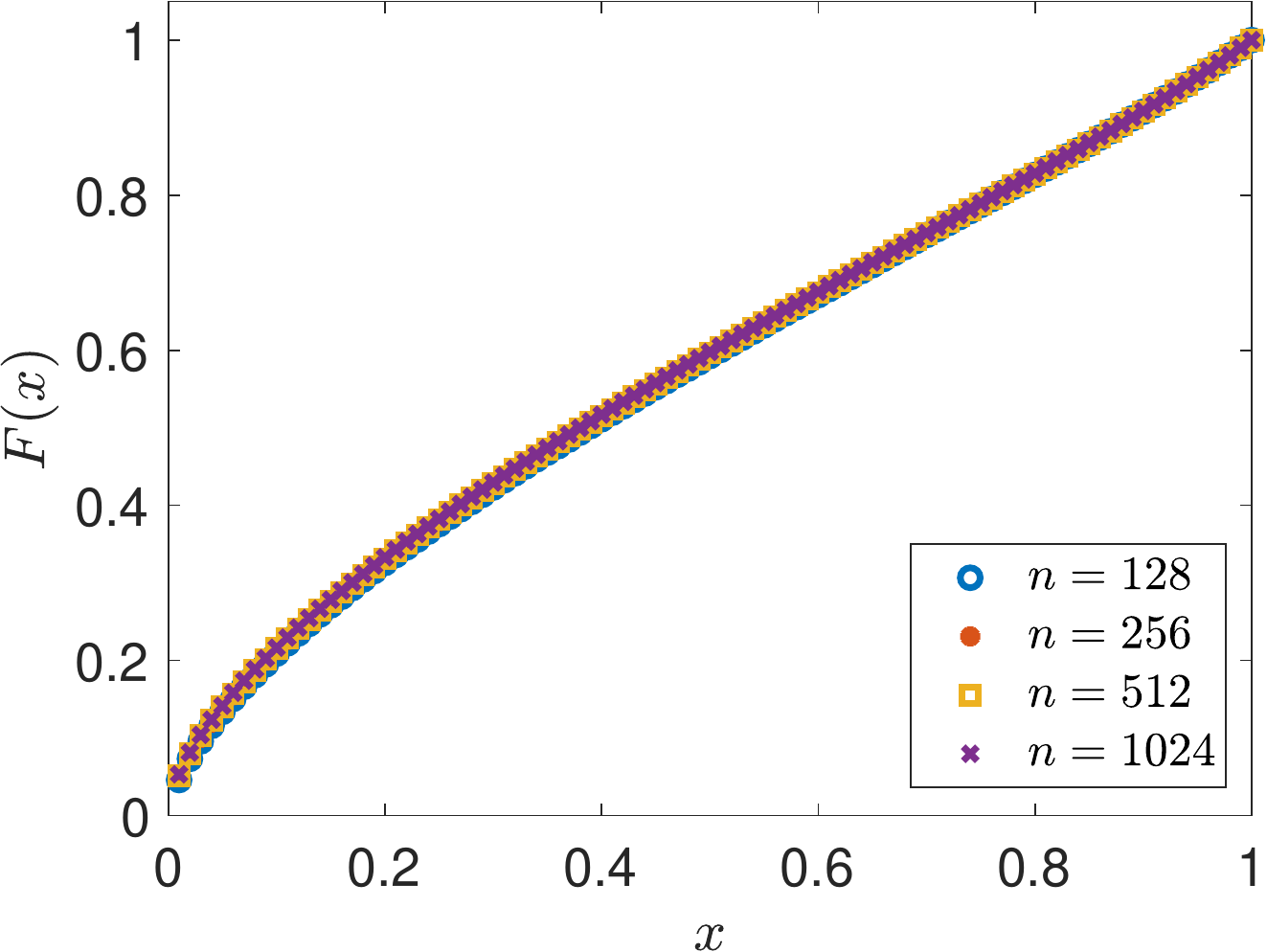}
		\caption{}
		\label{fig:Rescaling_bal}
	\end{subfigure}~
	\begin{subfigure}[t]{.49\textwidth}
		\centering
		\includegraphics[width=\textwidth]{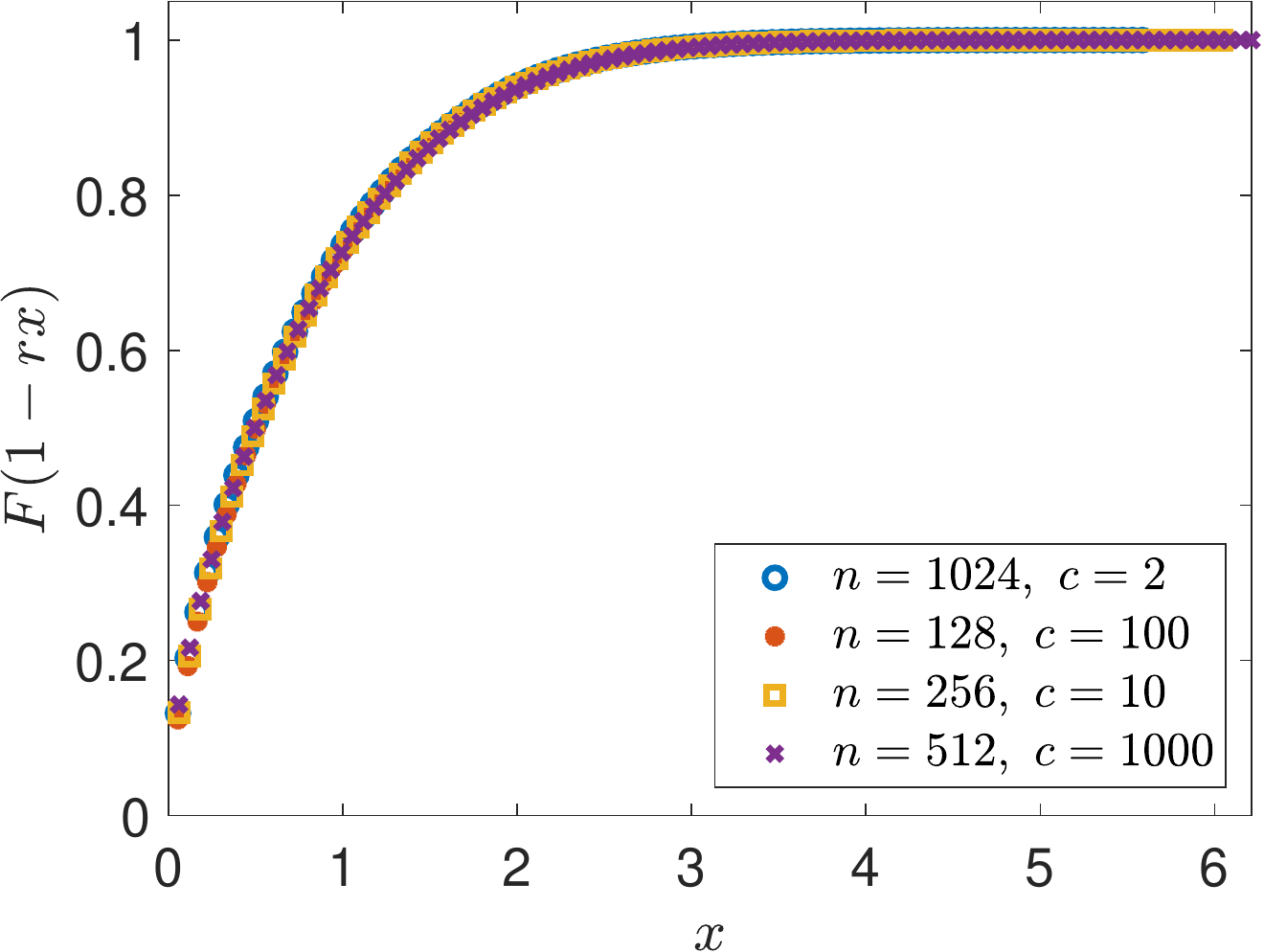}
		\caption{}
		\label{fig:Rescaling_unbal}
	\end{subfigure}\\
	\caption{Universal scaling of the distribution of  $\Pi\left(\psi \rightarrow \varphi \right)$. See Eqs.~\eqref{eq:scaling_unb}--\eqref{eq:scaling_bal}.  Panel (a) shows the balanced case ($c=1$), while panel (b) shows the unbalanced case ($c>1$) and $r=c^{1/6}(\sqrt{c}-1)^{2/3}n^{2/3}$.}
	\label{fig:VidalRescaledintro}
\end{figure}

\end{enumerate}
We stress again that (i)--(vi) above are conjectures in the light of numerical evidence.

The structure of the paper is as follows. In Section~\ref{sec:prelim} we recall the notion of majorization and its connection with entanglement theory through Nielsen's and Vidal's theorems.  
 In Section~\ref{sec:numericalmethod} we compare the standard numerical technique for sampling pure states according to the unitarily invariant measure and a method based on sampling and diagonalising tridiagonal random matrices.
In Section~\ref{sec:incomparability} we review Nielsen's conjecture on the volume of locally convertible states in the asymptotic regime of large dimensions.  In Section~\ref{sec:results} we apply this method to obtain the main results of the paper related to Nielsen's conjecture and the behavior of the maximal conversion probability according to Vidal's theorem. In Section~\ref{sec:conclusion} we draw some conclusions. The precise connection between Wishart matrices and random quantum states is given in~\ref{appA}.

\section{Majorization theory and LOCC-convertibility} \label{sec:prelim}
\subsection{Majorization relation}\label{sub:maj}
For a real vector $x=(x_1,\dots,x_n)$ we denote its decreasing rearrangement by $x^\downarrow$ with components $x_1^\downarrow\geq x_2^\downarrow \geq \dots \geq x_n^\downarrow$.
For two vectors $x,y\in\R^n$, we say that $x$ is \emph{majorized} by $y$---and write $x\prec y$---if 
\begin{align}\label{def:Majorization}
	&\sum_{j=1}^k x_j^\downarrow \leqslant \sum_{j=1}^k y_j^\downarrow,\,\,\text{for $k=1,\dots,n-1$},\qquad  \sum_{j=1}^n x_j^\downarrow =\sum_{j=1}^n y_j^\downarrow.
\end{align}
See~\cite{Bhatia,MOAinequalities}. 
Note that, if $x,y$ are probability vectors, then the last equality is automatically satisfied by virtue of the normalization condition. The majorization relation is a \emph{preorder} on the $(n-1)$-dimensional simplex of probability vectors 
\begin{equation}\label{eq:WeylCh}
\Delta_{n-1}=\bigg\{x\in\R_+^n\colon\sum_{k=1}^n x_k=1\bigg\},
\end{equation}
i.e.\ it is reflexive and transitive.  For $n>2$, the ordering is only partial:  there are vectors $x,y\in\Delta_{n-1}$ such that  neither $x\prec y$ nor $y\prec x$ (sometimes we say that $x$ and $y$ are not comparable). Moreover there are a \emph{smallest} and a \emph{largest} element (up to permutations). Indeed, for every probability vector $x\in \Delta_{n-1}$,
\be
\left(\frac{1}{n},\dots,\frac{1}{n}\right)\prec x\prec (1,0,\dots,0).
\label{eq:sm_larg}
\ee 
Another important property is that
\be
\text{$x\prec z$ and $y\prec z$ } \Rightarrow ax+(1-a)y\prec z,
\ee
for any real $a\in [0,1]$. Hence the set of vectors majorized by $z$ is a convex set, for any $z\in \Delta_{n-1}$.

The majorization condition can be characterised in terms of action of \emph{bistochastic matrices}, i.e.\ matrices with nonnegative entries and whose rows and columns sum to one. (In particular, bistochastic matrices map probability vectors into probability vectors and leave the smallest element $(1/n,\dots,1/n)$ invariant.) The following two classical results  give a geometric insight on the majorization relation. 
\begin{thm}[Hardy-Littlewood-P\'olya~\cite{Hardy52}]\label{thm:HLPthm}
	Let $x,y\in\R^n$. Then, $x\prec y$ if and only if there exists a bistochastic matrix $B$ such that $x=By$.
\end{thm}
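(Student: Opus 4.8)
The plan is to prove the two implications separately; the sufficiency direction ($x=By\Rightarrow x\prec y$) is elementary, while the necessity direction ($x\prec y\Rightarrow x=By$) carries the real content. First I would dispatch the easy direction. Assuming $x=By$ with $B$ bistochastic, the equality of total sums $\sum_i x_i=\sum_j y_j$ is immediate from the column-stochasticity of $B$. For the partial-sum inequalities, fix any set $S$ of $k$ indices and write $\sum_{i\in S}x_i=\sum_j c_j y_j$ with $c_j=\sum_{i\in S}B_{ij}$. Row-stochasticity gives $\sum_j c_j=k$, and column-stochasticity gives $0\le c_j\le 1$. Since the maximum of the linear functional $c\mapsto\sum_j c_j y_j$ over the polytope $\{\,0\le c_j\le1,\ \sum_j c_j=k\,\}$ is attained by placing unit weight on the $k$ largest entries of $y$, I obtain $\sum_{i\in S}x_i\le\sum_{j=1}^k y_j^\downarrow$. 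Choosing $S$ to index the $k$ largest components of $x$ yields $\sum_{j=1}^k x_j^\downarrow\le\sum_{j=1}^k y_j^\downarrow$ for each $k$, i.e.\ $x\prec y$.

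For the converse I would use the method of T-transforms (elementary Robin--Hood transfers). After relabelling---permutation matrices are bistochastic---I may assume $x=x^\downarrow$ and $y=y^\downarrow$. If $x=y$ I take $B=I$. Otherwise I would locate the smallest index $j$ with $x_j\neq y_j$; the partial-sum conditions force $x_j<y_j$, and the equal-sum condition then guarantees an index $k>j$ with $x_k>y_k$. I would then apply the T-transform that moves a small mass $\epsilon=\min\{y_j-x_j,\,x_k-y_k\}>0$ from coordinate $j$ to coordinate $k$ of $y$, producing a vector $y'$. Such a transform has the form $T=a\,I+(1-a)P$ with $P$ a transposition, hence is bistochastic, so $y'=Ty\prec y$; and a direct check of the partial sums shows $x\prec y'$. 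The choice of $\epsilon$ forces either $y'_j=x_j$ or $y'_k=x_k$, so $x$ and $y'$ agree in strictly more coordinates than $x$ and $y$. Iterating, after at most $n-1$ such transfers I reach $x$ exactly; composing the intermediate bistochastic T-transforms with the initial sorting permutation delivers a bistochastic $B$ with $x=By$.

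The main obstacle is the bookkeeping in the converse: I must verify that the transferred vector $y'$ remains majorized from above by $y$ and from below by $x$ at every step, and exhibit a genuine monovariant---here the number of coordinates on which the sorted vectors $x^\downarrow$ and $y'^\downarrow$ agree---that strictly improves, so that the induction terminates in finitely many steps. Keeping $y'$ in decreasing order while performing the transfer requires a little care, since lowering $y_j$ and raising $y_k$ could in principle disturb the ordering with intermediate entries; this is harmless because one may simply re-sort after each step, which only inserts further permutation matrices into the product. Everything else is routine.
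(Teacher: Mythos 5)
The paper itself offers no proof of this theorem: it is quoted as a classical result with a citation to Hardy--Littlewood--P\'olya, so your attempt can only be measured against the standard textbook argument, which is indeed the T-transform induction you outline. Your easy direction is complete and correct: writing $\sum_{i\in S}x_i=\sum_j c_j y_j$ with $c_j=\sum_{i\in S}B_{ij}\in[0,1]$, $\sum_j c_j=k$, and maximizing the linear functional over that polytope is exactly the right argument.

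The converse, however, contains a genuine gap: your termination monovariant fails once re-sorting is needed. Take $x=(6,4,4,4)$ and $y=(7,7,2,2)$, so $x\prec y$ and the two sorted vectors agree in no coordinate. Your rule selects $j=1$ (smallest index with $x_j\neq y_j$) and an admissible $k\in\{3,4\}$; in either case $\epsilon=\min\{y_1-x_1,\,x_k-y_k\}=1$, the transfer produces $y'=(6,7,3,2)$ (or $(6,7,2,3)$), and the re-sort you invoke gives $(y')^{\downarrow}=(7,6,3,2)$. One checks $x\prec (y')^{\downarrow}\prec y$, but $x$ and $(y')^{\downarrow}$ again agree in \emph{no} coordinate: the agreement created at position $j$ is destroyed by the re-sort, since the new value $6$ is pushed below the untouched entry $7$. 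So ``the number of agreeing coordinates of the sorted vectors'' is not a monovariant, the bound of $n-1$ steps is unjustified, and without a valid monovariant you have not even established that your process terminates. The classical fix is a different index choice: let $j$ be the \emph{largest} index with $x_j<y_j$ and $k$ the \emph{smallest} index greater than $j$ with $x_k>y_k$. Maximality of $j$ and minimality of $k$ force $x_i=y_i$ for all $j<i<k$, and from this one verifies that (i) the transferred vector is still in decreasing order, so no re-sorting ever occurs; (ii) $x\prec y'\prec y$, because the only partial sums that drop, those at levels $j\le l<k$, drop by $\epsilon\le y_j-x_j\le\sum_{i\le j}(y_i-x_i)$, the last inequality being majorization at level $j-1$; and (iii) the number of indices where the vectors differ strictly decreases, since either $y'_j=x_j$ or $y'_k=x_k$ and no new disagreement is created. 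With that choice your induction closes in at most $n-1$ steps exactly as you intended.
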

\begin{thm}[Birkhoff's Theorem]\label{thm:Birkhoff}
	The set of $n\times n$ bistochastic matrices is a convex set whose extreme points are the permutation matrices.
\end{thm}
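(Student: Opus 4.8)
The plan is to establish the two assertions separately: that the set $\Omega_n$ of $n\times n$ bistochastic matrices is convex, and that its extreme points are precisely the permutation matrices. Convexity is immediate, since the defining conditions (nonnegativity of entries, and each row and column summing to one) are preserved under convex combinations; thus $\Omega_n$ is the intersection of the half-spaces $\{B_{ij}\ge 0\}$ with the affine subspace fixing all row and column sums, hence a convex (and compact) polytope. It is also elementary that every permutation matrix $P$ is an extreme point: if $P=tA+(1-t)B$ with $A,B\in\Omega_n$ and $0<t<1$, then at each position where $P$ has a $0$ entry both $A$ and $B$ must vanish by nonnegativity, and since each row of $P$ contains a single $1$, the row-sum constraint then forces that same $1$ in $A$ and $B$, whence $A=B=P$.

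The substance of the theorem is the converse inclusion: every extreme point lies among the permutation matrices. I would prove the stronger statement that every $B\in\Omega_n$ is a convex combination of permutation matrices, which immediately forces the extreme points of $\Omega_n$ to be a subset of the (necessarily extreme) permutation matrices. The argument is by induction on the number $N$ of strictly positive entries of $B$. Since each of the $n$ rows must contain at least one positive entry, $N\ge n$, with equality exactly when $B$ is itself a permutation matrix (the base case). For the inductive step, suppose I can produce a permutation $\sigma$ with $B_{i,\sigma(i)}>0$ for all $i$, a \emph{positive generalized diagonal}. Setting $c=\min_i B_{i,\sigma(i)}\in(0,1]$ and letting $P_\sigma$ be the associated permutation matrix, the matrix $B-cP_\sigma$ has nonnegative entries and all row and column sums equal to $1-c$. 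If $c=1$ then $B=P_\sigma$; otherwise $B'=(1-c)^{-1}(B-cP_\sigma)$ is bistochastic and, because the diagonal minimum has been annihilated, $B'$ has strictly fewer positive entries than $B$. By the inductive hypothesis $B'$ is a convex combination of permutation matrices, and hence so is $B=cP_\sigma+(1-c)B'$.

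The crux, and the one genuinely nontrivial step, is the existence of the positive generalized diagonal, which I expect to be the main obstacle. This is a purely combinatorial fact about the \emph{support} of $B$, and it is where one invokes Hall's marriage theorem (equivalently König's theorem, or the Frobenius--König theorem). I would form the bipartite graph on two copies of $\{1,\dots,n\}$ (rows and columns), joining $i$ to $j$ whenever $B_{ij}>0$, and seek a perfect matching, which is precisely a positive generalized diagonal. Hall's condition is verified by a mass-counting argument: for any set $S$ of rows, the total mass they carry, namely $\sum_{i\in S}\sum_{j} B_{ij}=|S|$, is supported entirely on the neighbouring columns $N(S)$, whose total capacity is $\sum_{j\in N(S)}\sum_{i} B_{ij}=|N(S)|$, whence $|S|\le |N(S)|$. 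A perfect matching therefore exists, supplying the required $\sigma$. Combining the two inclusions — permutation matrices are extreme, and $\Omega_n$ is their convex hull — yields that the extreme points of $\Omega_n$ are exactly the permutation matrices, which is the assertion of the theorem.
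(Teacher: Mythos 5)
The paper states Birkhoff's theorem as a classical result and gives no proof of its own, so there is nothing internal to compare against; what matters is whether your argument stands on its own, and it does. Your proof is the standard one and is complete and correct: convexity and the extremality of permutation matrices are handled properly, the induction on the number of positive entries is well-founded (note it is strong induction, since $B'$ may lose more than one positive entry, but your phrasing accommodates this), and the mass-counting verification of Hall's condition, $|S|=\sum_{i\in S}\sum_{j\in N(S)}B_{ij}\le\sum_{j\in N(S)}\sum_{i=1}^{n}B_{ij}=|N(S)|$, is exactly the right way to extract the positive generalized diagonal. The final step, deducing that every extreme point is a permutation matrix from the fact that $\Omega_n$ is the convex hull of the permutation matrices, is also sound, since an extreme point cannot be a nontrivial convex combination of other points of the set.
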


Combining the above theorems, we have that $x$ is majorized by $y$ if and only if $x$ is a convex combination of permutations of $y$. 
Informally,  $x\prec y$  if and only if the vector $x$ can be obtained by ``mixing'' the $n!$ reshufflings of $y$ (a natural way to state that $x$ is more disordered than $y$). In fact, the set of probability vectors majorized by $y$ is the convex hull of all vectors that can be obtained by permuting the components of $y$.

\subsection{Local convertibility of quantum states and majorization}\label{sec:conversion}
Consider a bipartite system with finite dimensional Hilbert space $\Hi=\Hi_A\otimes\Hi_B$, with $ \Hi_A=\C^n$, $ \Hi_B=\C^m$, $n\leqslant m$. We also set $c=m/n$.

 Unitary matrices in $\mathrm{U}(nm)$ represent the class of most general transformations between pure states. In fact, the unitary group $\mathrm{U}(nm)$ acts transitively on the unit sphere of $\Hi$: every state $\ket{\psi}$ can be transformed into any wished target state by using a transformation in $\mathrm{U}(nm)$.
 Things change drastically if we consider instead the class of LOCC transformations. (Note that the class LOCC does not have a group structure.) 
 
Recall that a pure state $\ket{\psi}$ is \emph{separable} if $\psi_A$ is a rank-one projection operator; in this case $\lambda(\psi_A)=(1,0,\dots,0)$ (up to a permutation). A state $\ket{\psi}$ is said to be \emph{maximally entangled} if $\psi_A=I/n$; in this case $\lambda(\psi_A)=(1/n,\dots,1/n)$. For $\ket{\psi}\in\Hi$, the number of nonzero components of $\lambda(\psi_A)$ is called Schmidt number of $\ket{\psi}$; if the Schmidt number is strictly larger than one, then the state $\ket{\psi}$ is said to be \emph{entangled}. 
The restriction to LOCC transformations limits the possible state conversions: there are states $\ket{\psi}$, $\ket{\varphi}$ such that a local conversion is not possible, i.e.\ $\Pi\left(\psi \rightarrow \varphi \right)<1$. In particular, `entanglement cannot increase' in a local state conversion. In fact, some pairs of states are \emph{incomparable}, in the sense that neither $\ket{\psi}\rightarrow\ket{\varphi}$ nor $\ket{\varphi}\rightarrow\ket{\psi}$.
\par
For pure states, the `amount of entanglement' is connected in some way to how `uniform' the corresponding vector of Schmidt coefficients is.
When applying a local measurement to a state $\psi$ of $AB$, the order of the eigenvalues $\lambda(\psi_A)$ increases, in accordance with the idea that by means of a measurement we obtain information about the state being measured. This suggests that the majorization relation plays a role in entanglement theory. This intuition turns out to be correct: there is a precise criterion discovered by Nielsen~\cite{nielsen1999} linking the local convertibility of pure states of bipartite systems to the majorization relation between their local spectra. 
\begin{thm}[Nielsen's criterion~\cite{nielsen1999}]
\label{thm:Nielsen}
	Let  $\ket{\psi},\ket{\varphi}\in\Hi$. Then, $\ket{\psi}\rightarrow\ket{\varphi}$ if and only if  $\lambda(\psi_A)\prec\lambda(\varphi_A)$.
\end{thm}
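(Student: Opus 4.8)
The plan is to prove both implications by first stripping away everything except the entanglement spectra, and then translating the majorization relation into the existence of a suitable local measurement. First I would bring both states to Schmidt form, writing $\psi_A=\mathrm{diag}(\alpha)$ and $\varphi_A=\mathrm{diag}(\beta)$ in a common basis with $\alpha=\lambda(\psi_A)$ and $\beta=\lambda(\varphi_A)$; this is legitimate since local unitaries are reversible and leave the spectra untouched. Next I would invoke the one-round normal form for bipartite pure-state LOCC recalled above~\cite{Lo2001}, so that without loss of generality the protocol consists of a generalized measurement $\{M_k\}$ performed by $A$, with $\sum_k M_k^\dagger M_k=I$, followed by a classically conditioned unitary $U_k$ applied by $B$. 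The requirement $\Pi(\psi\to\varphi)=1$ then means that for every outcome $k$, occurring with probability $p_k=\Norm{(M_k\otimes I)\ket{\psi}}^2$, the corrected state $(M_k\otimes U_k)\ket{\psi}/\sqrt{p_k}$ equals $\ket{\varphi}$.

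For necessity I would compare the $A$-reduced density matrices branch by branch. Since $U_k$ acts only on $B$, tracing out $B$ gives $M_k\psi_A M_k^\dagger=p_k\varphi_A$ for each $k$. Working on the support of $\psi_A$, a polar decomposition of $M_k\psi_A^{1/2}$ yields $M_k=\sqrt{p_k}\,\varphi_A^{1/2}V_k\psi_A^{-1/2}$ for suitable unitaries $V_k$. Substituting into the completeness relation $\sum_k M_k^\dagger M_k=I$ and conjugating by $\psi_A^{1/2}$ gives $\psi_A=\sum_k p_k V_k^\dagger\varphi_A V_k$. Reading off the diagonal entries in the eigenbasis produces $\alpha_i=\sum_j D_{ij}\beta_j$ with $D_{ij}=\sum_k p_k\abs{(V_k)_{ji}}^2$. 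The unitarity of each $V_k$ forces both the row and the column sums of $D$ to equal one, so $D$ is bistochastic and $\alpha=D\beta$; by the Hardy--Littlewood--P\'olya theorem (Theorem~\ref{thm:HLPthm}) this is exactly $\lambda(\psi_A)\prec\lambda(\varphi_A)$.

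For sufficiency I would run the same computation in reverse to build an explicit protocol. Assuming $\lambda(\psi_A)\prec\lambda(\varphi_A)$, Theorem~\ref{thm:HLPthm} furnishes a bistochastic $D$ with $\alpha=D\beta$, and Birkhoff's theorem (Theorem~\ref{thm:Birkhoff}) decomposes $D=\sum_k p_k P_k$ into permutation matrices $P_k$ with weights $p_k\geq0$, $\sum_k p_k=1$. Taking $V_k=P_k$ one checks that $\sum_k p_k P_k^\dagger\varphi_A P_k=\psi_A$, so the operators $M_k=\sqrt{p_k}\,\varphi_A^{1/2}P_k\psi_A^{-1/2}$ satisfy $\sum_k M_k^\dagger M_k=I$ and define a valid measurement for $A$. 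Using the transpose identity for the unnormalized maximally entangled vector, $(M_k\otimes I)\ket{\psi}=\sqrt{p_k}\,(I\otimes P_k^{T})\ket{\varphi}$, so if $B$ applies the permutation unitary $U_k=(P_k^{T})^{-1}$ upon learning $k$, every branch deterministically outputs $\ket{\varphi}$. This exhibits the required LOCC conversion and closes the equivalence.

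The hard part is conceptual rather than computational: it lies in taming the a priori unwieldy structure of a general, adaptive, multi-round LOCC protocol, for which the branch-by-branch identities above are not immediately available. Once the one-round normal form~\cite{Lo2001} is granted, both directions collapse to the bistochastic bookkeeping just described. The only genuine technical care is needed when the spectra contain zeros, i.e.\ when $\psi_A$ is not invertible: there one restricts all operators to the support of $\psi_A$ and pads the spectra with zeros so that the majorization is read between $n$-vectors, noting that a deterministic conversion can never increase the Schmidt rank.
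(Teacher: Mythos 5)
Your proof is correct, but note that the paper itself never proves this theorem: it states Nielsen's criterion as an imported result, citing~\cite{nielsen1999}, and only assembles the surrounding toolkit (the Hardy--Littlewood--P\'olya and Birkhoff theorems in Section~\ref{sub:maj}, and the Lo--Popescu one-round normal form in~\ref{appLOCC}). What you have written is essentially Nielsen's original argument, reconstructed exactly from those ingredients, so your route is the ``intended'' one rather than a genuinely different one: necessity via the branch identity $M_k\psi_A M_k^\dagger=p_k\varphi_A$, polar decomposition, and the bistochastic matrix $D_{ij}=\sum_k p_k\abs{(V_k)_{ji}}^2$ combined with Theorem~\ref{thm:HLPthm}; sufficiency by running the construction backwards through Theorem~\ref{thm:Birkhoff}. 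Two small points deserve care. First, there is a transpose slip in the sufficiency step: with $D=\sum_k p_k P_k$ and $\alpha=D\beta$, one has $\bigl(P_k^\dagger\varphi_A P_k\bigr)_{ii}=\sum_j (P_k)_{ji}\beta_j$, so $\sum_k p_k P_k^\dagger\varphi_A P_k=\mathrm{diag}(D^{T}\beta)$, not $\mathrm{diag}(D\beta)$; you should take $V_k=P_k^{T}$ (equivalently, apply Birkhoff to $D^{T}$, which is also bistochastic). This is pure bookkeeping and does not affect validity. Second, your necessity argument is actually cleaner than you advertise regarding zero eigenvalues: conjugating the completeness relation by $\psi_A^{1/2}$ gives $\sum_k p_k V_k^\dagger\varphi_A V_k=\psi_A$ directly from $(M_k\psi_A^{1/2})^\dagger(M_k\psi_A^{1/2})$, with no need for $\psi_A^{-1/2}$ at all, so the restriction-to-support caveat is only needed in the sufficiency direction. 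The one genuinely load-bearing external input, as you correctly identify, is the reduction of arbitrary adaptive LOCC to a single measurement by $A$ plus a conditioned unitary by $B$~\cite{Lo2001}, which the paper grants in~\ref{appLOCC}; without it the necessity direction would require either an induction over protocol rounds or Vidal's monotonicity argument for the tail sums $E_k$.
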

Immediate consequences of Nielsen's criterion are:
\begin{enumerate}
	\item A maximally entangled state can be locally converted into any pure state;
	\item Any pure state can be locally converted into a separable state.
\end{enumerate}
These two examples of LOCC-conversions are always possible because the corresponding local spectra are the smallest and largest elements~\eqref{eq:sm_larg} in the simplex $\Delta_{n-1}$, respectively. On the other hand, as already stressed, the majorization relation is only a partial order in $\Delta_{n-1}$. This readily explains the existence of incomparable states. 

\subsection{Local convertibility and majorization}
Nielsen's criterion is a result about conclusive LOCC-conversions, i.e.\ about the possibility to find an operation in the class LOCC such that, given pure bipartite states $\ket{\psi},\ket{\varphi}$, the probability of converting $\ket{\psi}$ into $\ket{\varphi}$ is $1$. 
Suppose that a perfect local conversion of  $\ket{\psi}$ into $\ket{\varphi}$ cannot be achieved; is there any procedure realizing the conversion with positive probability of success? And if this is the case, is it possible to determine the maximum probability of success in the conversion? The precise answer to these questions was found by Vidal~\cite{vidal1999} and is expressed in the following theorem.

\begin{thm}[Vidal's criterion~\cite{vidal1999}]
\label{ref:thm:Vidal}
Let  $\ket{\psi},\ket{\varphi}\in\Hi$. Then,
	\begin{equation}\label{eq:VidalProb}
		\Pi\left(\psi \rightarrow \varphi \right)=\Pi(\lambda(\psi_A),\lambda(\varphi_A))
	\end{equation}
			where
	\begin{equation}\label{eq:PIdef}		
		\Pi(x,y)=\min_{1\leqslant k \leqslant n} \frac{\sum_{j=k}^n x_j^{\downarrow}}{\sum_{j=k}^n y_j^{\downarrow}}.
	\end{equation}
\end{thm}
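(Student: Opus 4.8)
The plan is to prove the two inequalities separately: a converse showing that no LOCC protocol can succeed with probability exceeding $\Pi(\lambda(\psi_A),\lambda(\varphi_A))$, and a matching achievability result exhibiting a protocol that attains it. Throughout I would work directly with the entanglement spectra $x=\lambda(\psi_A)$ and $y=\lambda(\varphi_A)$: by local unitary invariance only the Schmidt vectors matter, and by the one-round reduction for bipartite pure-state manipulation~\cite{Lo2001} a general protocol may be taken to consist of a single measurement on $A$, classical communication, and a conditional local unitary on $B$. This reduces the claim to a finite-dimensional statement about the two probability vectors $x,y\in\Delta_{n-1}$.

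For the converse I would use the family of tail sums
\[
E_k(\lambda)=\sum_{j=k}^{n}\lambda_j^\downarrow,\qquad k=1,\dots,n,
\]
as entanglement monotones. Each $E_k$ is a symmetric, concave function of $\lambda$ (it equals $1$ minus the sum of the $k-1$ largest components, which is convex), hence by the characterisation of pure-state entanglement monotones~\cite{vidal2000entanglement} it does not increase on average under LOCC. If a protocol yields $\varphi$ with probability $p$ (and other pure outcomes with total probability $1-p$), monotonicity gives $E_k(x)\ge p\,E_k(y)+(\text{nonnegative terms})\ge p\,E_k(y)$, where the last step only uses $E_k\ge 0$. Since $y$ has $n$ nonzero components, $E_k(y)>0$, so $p\le E_k(x)/E_k(y)$ for every $k$; minimising over $k$ yields $\Pi(\psi\to\varphi)\le \Pi(x,y)$.

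For achievability I would first characterise the spectra reachable by a single measurement on $A$. Writing $D_x=\mathrm{diag}(x)$, the unnormalised reduced state of a successful branch has the same spectrum as $D_x^{1/2}N D_x^{1/2}$ for some POVM element $0\le N\le I$, so that $0\le D_x^{1/2}N D_x^{1/2}\le D_x$; by eigenvalue monotonicity its sorted spectrum $w$ obeys $w_j^\downarrow\le x_j^\downarrow$ for all $j$, and conversely every such $w$ is attainable, with success probability $p=\sum_j w_j$. Letting $k^\star$ attain the minimum and setting $p^\star=E_{k^\star}(x)/E_{k^\star}(y)$, I would exhibit a decreasing $w$ with $\sum_j w_j=p^\star$, $w_j^\downarrow\le x_j^\downarrow$, and $w/p^\star\prec y$, obtained by a water-filling of the source spectrum that preserves as much of the tail below the cut $k^\star$ as possible while saturating the majorisation caps $\sum_{j\le\ell}w_j\le p^\star\sum_{j\le\ell}y_j^\downarrow$; the defining inequalities $E_k(x)\ge p^\star E_k(y)$, together with the equality at $k^\star$, make this allocation feasible. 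Taking $N=D_x^{-1/2}\,\mathrm{diag}(w)\,D_x^{-1/2}$ (so that $N\le I$ follows from $w_j^\downarrow\le x_j^\downarrow$) produces, with probability $p^\star$, a branch whose normalised spectrum is majorised by $y$; by Nielsen's criterion (Theorem~\ref{thm:Nielsen}) this branch converts deterministically into $\varphi$. Composing the two stages gives $\Pi(\psi\to\varphi)\ge p^\star=\Pi(x,y)$, and with the converse this proves the theorem.

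The main obstacle is the achievability construction, and specifically the verification that the water-filling allocation can be carried out \emph{with the components of $w$ kept in decreasing order}, so that the cumulative bounds $\sum_{j\le\ell}w_j\le p^\star\sum_{j\le\ell}y_j^\downarrow$ genuinely certify the majorisation $w/p^\star\prec y$ (rather than a weaker ordered-partial-sum inequality) and simultaneously guarantee $N\le I$ slot by slot. Feasibility reduces to a transportation/Hall-type condition that follows from the inequalities defining $p^\star$, but the bookkeeping around the cut at $k^\star$ and the preservation of monotonicity of $w$ is the delicate point. The two external inputs I would rely on are the probabilistic one-round reduction for bipartite LOCC and the concavity characterisation of pure-state entanglement monotones.
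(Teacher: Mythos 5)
First, a point of comparison: the paper does not prove this theorem at all --- it quotes it from Vidal~\cite{vidal1999}, and the only related argument it supplies is the proof of the Proposition giving the geometric characterisation~\eqref{eq:maxp}. So your attempt must be judged on its own merits against the standard proof. Your converse half is correct and is essentially the standard argument: each tail sum $E_k$ is a unitarily invariant, concave function of the reduced density matrix (one minus a maximum of linear functionals), hence an entanglement monotone in the sense of~\cite{vidal2000entanglement}, and monotonicity on average together with $E_k\geq 0$ and $E_k(y)>0$ gives $p\leq E_k(x)/E_k(y)$ for every $k$. There is no circularity here, since the concavity criterion for monotones is proved independently of the present theorem.

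The achievability half, however, has a genuine gap exactly where you flag it. The existence of a \emph{decreasing} vector $w$ with $w_j\leq x_j^{\downarrow}$ componentwise, $\sum_j w_j=p^{\star}$, and $\sum_{j\leq \ell}w_j\leq p^{\star}\sum_{j\leq\ell}y_j^{\downarrow}$ for all $\ell$ is the entire content of the lower bound, and it is asserted (``a transportation/Hall-type condition'') rather than proved. It is not routine: the allocation you describe, which keeps the tail of $x$ below the cut and saturates the caps above it, fails already for $x=(0.5,0.4,0.1)$, $y=(0.8,0.1,0.1)$, where $p^{\star}=1$ with minimiser $k^{\star}=3$, but $p^{\star}y_1^{\downarrow}=0.8>0.5=x_1^{\downarrow}$, so the resulting $N$ would not satisfy $N\leq I$; the repair must juggle the componentwise caps, the cumulative caps and monotonicity of $w$ simultaneously, and all of the difficulty of the theorem is concentrated there. (The lemma is in fact true --- one can extract such a $w$ from the Lo--Popescu normal form of an optimal protocol using Ky Fan's inequality $\lambda(A+B)\prec\lambda^{\downarrow}(A)+\lambda^{\downarrow}(B)$ --- but that route presupposes the theorem or heavy machinery, so as written your proof is incomplete.)

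The standard repair, for which the paper's Proposition~\eqref{eq:maxp} is tailor-made, is to run your two stages in the opposite order: convert first, filter second. By~\eqref{eq:maxp}, $x^{\downarrow}\prec z$ where $z=p^{\star}y^{\downarrow}+(1-p^{\star})e\in\Delta_{n-1}$, so Nielsen's criterion (Theorem~\ref{thm:Nielsen}) gives a \emph{deterministic} LOCC conversion of $\ket{\psi}$ into a state $\ket{\chi}$ with $\lambda(\chi_A)=z$. Then Alice applies the two-outcome measurement with Kraus operators, written in the Schmidt basis of $\ket{\chi}$,
\begin{equation}
M_1=\operatorname{diag}\Bigl(\sqrt{p^{\star}y_j^{\downarrow}/z_j}\Bigr),
\qquad
M_2=\sqrt{I-M_1^{\dagger}M_1},
\end{equation}
which is legitimate because $p^{\star}y_j^{\downarrow}\leq z_j$ for every $j$ (the extra mass $(1-p^{\star})e$ sits only in the first slot, and this is the only place it is needed). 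Outcome $1$ occurs with probability $\sum_j p^{\star}y_j^{\downarrow}=p^{\star}$ and leaves a pure state with Schmidt vector $y^{\downarrow}$, i.e.\ $\ket{\varphi}$ up to a local unitary communicated to Bob. This yields $\Pi(\psi\to\varphi)\geq p^{\star}$ with no combinatorial lemma at all; combined with your converse, the theorem follows.
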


Note that Vidal's theorem is an extension of Nielsen's result. Indeed,
\be
x\prec y\quad\iff\quad \Pi(x,y)=1.
\ee
We can also write (as done originally by Vidal) 
\be
\Pi(x,y)=\min_{1\leqslant k \leqslant n}\frac{E_k(x)}{E_k(y)},
\ee
where the functions $E_k(x)=\sum_{j=k}^n x_j^{\downarrow}$ are entanglement monotones (Schur-concave functions).

We now give a geometric characterization of the function $\Pi(\cdot,\cdot)$.

\begin{prop}
Let $e=(1,0,\dots,0)\in\Delta_{n-1}$. Then,
\begin{equation}
	\Pi(x,y)=\max\left\{0\leqslant p \leqslant 1 :   x^{\downarrow}\prec py^{\downarrow}+(1-p)e\right\}.
	\label{eq:maxp}
\end{equation}
\end{prop}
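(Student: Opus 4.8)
The plan is to convert the majorization constraint defining the right-hand side of~\eqref{eq:maxp} into an explicit system of scalar inequalities in $p$ and then read off the largest feasible value. The observation that makes this painless is that, for every $0\le p\le 1$, the vector $z(p):=py^\downarrow+(1-p)e$ is \emph{already} sorted in decreasing order: one adds the nonnegative number $(1-p)$ only to the top component of the decreasing vector $py^\downarrow$, which can only reinforce the ordering, so $z(p)^\downarrow=z(p)$ and the partial sums of $z(p)$ may be fed directly into the definition~\eqref{def:Majorization}. Both $x$ and $z(p)$ are probability vectors, the latter because $p\cdot 1+(1-p)\cdot 1=1$; hence the equality of total sums in~\eqref{def:Majorization} holds automatically and only the $k=1,\dots,n-1$ partial-sum inequalities survive.

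Next I would compute those partial sums. Since $e$ contributes only to the first coordinate, $\sum_{j=1}^k z(p)_j=p\sum_{j=1}^k y_j^\downarrow+(1-p)$ for every $k\ge 1$. Writing each partial sum of a probability vector as one minus a tail sum, $\sum_{j=1}^k w_j^\downarrow=1-E_{k+1}(w)$, the constraints $\sum_{j=1}^k x_j^\downarrow\le\sum_{j=1}^k z(p)_j$ reduce, after cancelling the common constant, to $E_{k+1}(x)\ge p\,E_{k+1}(y)$ for $k=1,\dots,n-1$. Because $y$ has $n$ nonzero components we have $E_k(y)\ge y_n^\downarrow>0$, so there is no division by zero and the whole system is equivalent to $p\le E_k(x)/E_k(y)$ for $k=2,\dots,n$.

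It then remains to identify the supremum of the feasible set and to check it is attained. Taking $p=0$ gives $z(0)=e$, the largest element of $\Delta_{n-1}$ under majorization, so $x^\downarrow\prec e$ holds unconditionally: the set is nonempty. The inequalities above cut out the closed interval $\big[0,\min\{1,\min_{2\le k\le n}E_k(x)/E_k(y)\}\big]$, whose right endpoint is therefore the desired maximum. Finally, the omitted index $k=1$ contributes the ratio $E_1(x)/E_1(y)=1/1=1$, so adjoining it leaves the minimum unchanged, and $\min\{1,\min_{2\le k\le n}E_k(x)/E_k(y)\}=\min_{1\le k\le n}E_k(x)/E_k(y)=\Pi(x,y)$ by~\eqref{eq:PIdef}. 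The only genuinely delicate point is the ordering check for $z(p)$: once one is convinced that it requires no re-sorting, the remainder is bookkeeping with tail sums, with the secondary care-point being the nonemptiness of the feasible set, which guarantees that the $\max$ in~\eqref{eq:maxp} is attained rather than a bare supremum.
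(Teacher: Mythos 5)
Your proof is correct and follows essentially the same route as the paper's: both reduce the majorization $x^{\downarrow}\prec p\,y^{\downarrow}+(1-p)e$ to the system of tail-sum inequalities $p\,E_k(y)\leqslant E_k(x)$, with the term $(1-p)e$ playing no role except to restore normalization at $k=1$. Your write-up simply makes explicit the checks the paper leaves implicit (that $p\,y^{\downarrow}+(1-p)e$ is already decreasingly ordered, that $E_k(y)>0$, and that the feasible set is a nonempty closed interval so the maximum is attained).
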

\begin{proof}
To prove~\eqref{eq:maxp}, {first note that}:
\begin{align}
	\Pi(x,y)&=\max\left\{0\leqslant p \leqslant 1 \colon \sum_{j=k}^npy_k^\downarrow \leqslant  \sum_{{j=k}}^n x_k^\downarrow \quad \text{for $1\leq k\leq n$} \right\}, \label{eq:vid_ineq_1}
\end{align}
and then  observe that
\be
\sum_{j=k}^npy_k^\downarrow \leqslant  \sum_{{j=k}}^n x_k^\downarrow\quad \text{for $1\leq k\leq n$} \iff x^{{\downarrow}} \prec py^{{\downarrow}}+(1-p)e.
\ee
Indeed, the term $(1-p)e$ does not affect the inequalities $\sum_{j=k}^npy_k^\downarrow \leqslant  \sum_{{j=k}}^n x_k^\downarrow$  for $k=2,\dots,n$, while saturates the last inequality for $k=1$, ensuring the normalization condition. 
\end{proof}
This remark gives a geometric picture of the maximal probability of success:  $\ket{\psi}$ cannot be conclusively converted into $\ket{\varphi}$ (i.e., $\Pi\left(\psi \rightarrow \varphi \right)<1$), if and only if $\lambda(\psi_A)$ is not majorized by $\lambda(\varphi_A)$ or, in other words, if $\lambda(\psi_A)$ is not contained in the  polytope given by the convex hull of the permutations of $\lambda(\varphi_A)$. 
The analogue polytope associated to a convex combination  of  $\lambda(\varphi_A)^{\downarrow}$ with the largest element $e$ is always larger than the polytope associated to $\lambda(\varphi_A)^{\downarrow}$. We can therefore consider an enlarged polytope associated to $p\lambda(\varphi_A)^{\downarrow}+(1-p)e$ that contains $\lambda(\psi_A)$ (such a polytope always exists).
 $\Pi(\lambda(\psi_A),\lambda(\varphi_A))$ represents the maximum weight we can assign to $\lambda(\varphi_A)$ in the convex combination such that this is possible. In summary, we have
 \be
\Pi\left(\psi \rightarrow \varphi \right)=\max\left\{0\leqslant p \leqslant 1 :    \lambda(\psi_A)^{\downarrow}\prec p\lambda(\varphi_A)^{\downarrow}+(1-p)e\right\}.
 \ee

\section{Random pure states, Laguerre unitary ensemble and tridiagonal models} 
\label{sec:numericalmethod}
In this section we recall the essential theory of random pure states and we explain the numerical method that made possible the study reported in the present paper. 
For an extensive review on the topic of probability measure on the set of quantum states, see~\cite{zyczkowski2011}. 
The following theorem, essentially due to Lloyd and Pagels~\cite{lloyd1988complexity}, is one of the central results in the theory of random quantum states. For a detailed proof, see~\cite{zyczkowski2001induced}. 
\begin{thm}[Lloyd and Pagels~\cite{lloyd1988complexity}] Let $\ket{\psi}$ be a state distributed according to the unitarily invariant measure on the unit sphere of $\Hi=\Hi_A\otimes\Hi_B$, with $\Hi_A=\C^n$, $\Hi_B= \C^m$, and  $n\leq m$.  
The local spectrum $\lambda(\psi_A)$ is a random $n$-tuple with probability density
\begin{equation}\label{eq:jointEigDistHaar}
g_{n,m}(\lambda)=c_{n,m}\prod_{1\leqslant i<j \leqslant n} (\lambda_i-\lambda_j)^2 \prod_{1\leqslant k \leqslant n} \lambda_k^{m-n}\, 1_{\Delta_{n-1}}(\lambda)
\end{equation}
where $1_{\Delta_{n-1}}$ is the indicator function of the simplex~\eqref{eq:WeylCh}, and 
\begin{equation}
	c_{n,m}=\frac{\Gamma(nm)}{\prod_{j=1}^n \Gamma(j+1)\Gamma(m-n+j)}.
\end{equation}
\end{thm}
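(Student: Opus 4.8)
The plan is to realise the uniform pure state through a complex Gaussian (Ginibre) matrix and to recognise the local spectrum as the \emph{normalised} spectrum of a complex Wishart matrix, whose law is the Laguerre unitary ensemble. First I would identify $\Hi=\C^n\otimes\C^m$ with the space of $n\times m$ complex matrices, writing $\ket{\psi}=\sum_{i,j}M_{ij}\ket{i}_A\otimes\ket{j}_B$. Under this identification the unitarily invariant measure on the unit sphere of $\Hi$ is the pushforward of the standard complex Gaussian (Ginibre) measure on $\C^{nm}$, with density proportional to $e^{-\Tr(MM^\dagger)}$, under the radial projection $M\mapsto M/\sqrt{\Tr(MM^\dagger)}$: the Gaussian law is invariant under $\mathrm{U}(nm)$, and a Gaussian vector divided by its norm is uniform on the sphere. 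Moreover $\psi_A=\Tr_B(\ketbra{\psi}{\psi})$ corresponds to $MM^\dagger/\Tr(MM^\dagger)$, so that if $\mu_1,\dots,\mu_n$ denote the eigenvalues of the Wishart matrix $W=MM^\dagger$, then the local spectrum is $\lambda_k(\psi_A)=\mu_k/\sum_j\mu_j$.

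Next I would invoke the classical eigenvalue density of the complex Wishart matrix $W=MM^\dagger$ with $M$ Ginibre of size $n\times m$ and $n\leq m$, namely the Laguerre unitary ensemble
\begin{equation}
p_{n,m}(\mu)=\frac{1}{Z_{n,m}}\prod_{1\leq i<j\leq n}(\mu_i-\mu_j)^2\prod_{1\leq k\leq n}\mu_k^{m-n}\,e^{-\mu_k},\qquad \mu\in\R_+^n,
\end{equation}
where, by Selberg's integral, $Z_{n,m}=\prod_{j=1}^n\Gamma(j+1)\Gamma(m-n+j)$. I would then separate the radial from the angular degrees of freedom via the change of variables $r=\sum_k\mu_k$ and $\lambda_k=\mu_k/r\in\Delta_{n-1}$, whose Jacobian is $r^{n-1}$. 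By homogeneity the Vandermonde factor scales as $r^{n(n-1)}$ and the monomial factor as $r^{n(m-n)}$, while $e^{-\sum_k\mu_k}=e^{-r}$ depends on $r$ alone; the total power of $r$ is $n(n-1)+n(m-n)+(n-1)=nm-1$. Hence the joint density factorises into a function of $r$ times a function of $\lambda$, so that $r$ and $\lambda$ are independent, and integrating out the radial part gives $\int_0^\infty r^{nm-1}e^{-r}\,\ud r=\Gamma(nm)$.

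Collecting the factors yields the marginal law of the local spectrum,
\begin{equation}
g_{n,m}(\lambda)=\frac{\Gamma(nm)}{Z_{n,m}}\prod_{1\leq i<j\leq n}(\lambda_i-\lambda_j)^2\prod_{1\leq k\leq n}\lambda_k^{m-n}\,1_{\Delta_{n-1}}(\lambda),
\end{equation}
which is precisely the claimed density with $c_{n,m}=\Gamma(nm)/Z_{n,m}=\Gamma(nm)/\prod_{j=1}^n\Gamma(j+1)\Gamma(m-n+j)$. As a consistency check one verifies that $g_{n,m}$ integrates to one over $\Delta_{n-1}$, which is equivalent to the Selberg evaluation of $Z_{n,m}$.

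The main obstacle is not conceptual but computational: the exact evaluation of the normalisation $Z_{n,m}$ (equivalently of $c_{n,m}$) through Selberg's integral, together with careful bookkeeping of the Ginibre-to-Wishart step and of the matrix-to-eigenvalue Jacobian that produces the squared Vandermonde factor. Once the parameter $\alpha=m-n$ is tracked correctly, the radial decoupling and the Gamma integral make the remaining steps routine. For the detailed verification I would refer to the standard treatment of induced measures on the set of quantum states.
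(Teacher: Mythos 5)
Your proof is correct and follows essentially the same route as the paper's own argument (Section 3 together with the appendix on local spectra): identify the local spectrum with the normalised eigenvalues of a complex Wishart matrix distributed as the Laguerre unitary ensemble, pass to the variables $(\lambda,r)$ with $r=\sum_k\mu_k$ and Jacobian $r^{n-1}$, use homogeneity to collect the power $r^{nm-1}$, and integrate out the radial coordinate via a Gamma integral to obtain $c_{n,m}=\Gamma(nm)/\prod_{j=1}^n\Gamma(j+1)\Gamma(m-n+j)$. The only differences are cosmetic: you adopt the $e^{-\mu_k}$ convention for the LUE where the paper uses $e^{-\mu_k/2}$ (a rescaling of the radial variable that cancels in the normalised spectrum), and you spell out the Ginibre-to-sphere pushforward and the Selberg evaluation of $Z_{n,m}$, which the paper instead quotes from the literature.
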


A method to sample $n$-tuples from the distribution~\eqref{eq:jointEigDistHaar} comes from results in random matrix theory~\cite{mehta2004random,Vivo}. 
Let $A$ be a $n\times m$ matrix with independent standard complex Gaussian entries. The $n\times n$ random matrix  $AA^\dagger$
is called \emph{complex Wishart matrix} of size $n$ and parameter $m$. 
The joint  distribution of the eigenvalues of    $AA^\dagger$  is~\cite{edel2002mat,Vivo}
\begin{equation}\label{eq:jointEigWishart}
G_{n,m}(\lambda)=C_{n,m}  \prod_{1\leqslant i< j \leqslant n} (\lambda_i-\lambda_j)^2 \prod_{k=1}^n \lambda_{k}^{m-n} e^{-\lambda_k/2} \,1_{\R_+^n}(\lambda)
\end{equation}
where
\begin{equation}
	C_{n,m}=\frac{1}{2^{mn}}\frac{1}{\prod_{j=1}^n \Gamma(j+1) \Gamma (m-n+j)}.
\end{equation}
The probability density~(\ref{eq:jointEigWishart}) is known in random matrix theory as Laguerre unitary ensemble. It can be shown that (see~\ref{appA}) the eigenvalues  of the normalised matrix $AA^\dagger /\Tr(AA^\dagger)$ are distributed according to~\eqref{eq:jointEigDistHaar}. This simple connection between Wishart matrices and random density matrices has been exploited extensively in the previous literature~\cite{Chen10,Cunden13,DePasquale10,DePasquale12,Facchi08,Facchi13,Facchi19,Kubotani08,Kumar11,Kumar17,majumdar2008exact,Nadal10,Osipov20,page1993average,Slater18,Slater19,Vivo10,Vivo11,Vivo16,Znidaric07,zyczkowski2011}, and provides a direct route to sample the local spectrum $\lambda(\psi_A)$ of a  uniformly distributed  pure state $\ket{\psi}\in\C^n\otimes\C^m$. The method proceeds as follows:
\begin{enumerate}
	\item Sample a $n\times m$ random matrix $A$ with independent complex Gaussian entries;
	\item Compute  the eigenvalues of the $n\times n$ random matrix $AA^\dagger /\Tr(AA^\dagger)$.
\end{enumerate}

As already told in advance in the Introduction, there exists a more efficient procedure based on a tridiagonal construction  (see~\cite{edel2002mat}, Theorem 3.4) which is interesting both from a theoretical point of view and for numerical computations~\cite{edelman2013random}.
\begin{thm}[Dumitriu and Edelman~\cite{edel2002mat}]
\label{thm:dum_edel}
	Let $A$ be the $n\times n$ random bidiagonal matrix
	\begin{equation}\label{eq:Am}
	A=
		\begin{pmatrix}
			\chi_{2m} & & & \\
			\ \chi_{2(n-1)} & \ \chi_{2m-2} & & \\
			& \ddots & \ddots& \\
			& & \chi_2 & \chi_{2m-2(n-1)}
		\end{pmatrix},
	\end{equation}
where {the $\chi_\nu$'s are independent chi-distributed (real) random variables} with $\nu$ degrees of freedom. Then, the eigenvalues of  $AA^T$ are distributed according to the Laguerre unitary ensemble~\eqref{eq:jointEigWishart}.
\end{thm}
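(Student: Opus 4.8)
The plan is to realize the bidiagonal matrix of \eqref{eq:Am} as the end product of a Householder bidiagonalization of a rectangular complex Gaussian matrix, thereby reducing the statement to the already-known eigenvalue law of the complex Wishart ensemble. Concretely, let $G$ be an $n\times m$ matrix of independent standard complex Gaussian entries, so that $GG^\dagger$ is a complex Wishart matrix whose eigenvalues obey the Laguerre unitary ensemble \eqref{eq:jointEigWishart}. I would show that $G$ can be brought, by unitary transformations acting on the left and on the right, to the form $U^\dagger G V=(A\mid 0)$, where $A$ is the real lower-bidiagonal $n\times n$ matrix of \eqref{eq:Am} and $0$ is an $n\times(m-n)$ block of zeros. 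Since $U\in\mathrm{U}(n)$ and $V\in\mathrm{U}(m)$ are unitary, $U^\dagger (GG^\dagger) U=(U^\dagger G V)(U^\dagger G V)^\dagger=AA^\dagger=AA^T$, the last equality holding because $A$ is real. Hence the eigenvalues of $AA^T$ coincide with those of $GG^\dagger$, and it remains only to identify the entries of $A$ as the claimed chi variables.

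The reduction is carried out one row and column at a time, alternating right and left Householder reflections. First I would apply a reflection $V_1\in\mathrm{U}(m)$, determined by the first row of $G$, mapping that row to $(\|r_1\|,0,\dots,0)$; the pivot $\|r_1\|$ is the norm of a standard complex Gaussian vector of length $m$, hence $\chi_{2m}$, which matches the $(1,1)$ entry of \eqref{eq:Am}. Next I would apply a reflection $U_1\in\mathrm{U}(n)$ acting on rows $2,\dots,n$ to collapse the part of the first column below the diagonal onto a single entry, which is the norm of a Gaussian vector of length $n-1$ and hence $\chi_{2(n-1)}$, matching the subdiagonal. Iterating on the deflated block fills the diagonal with $\chi_{2m},\chi_{2m-2},\dots,\chi_{2(m-n+1)}$ and the subdiagonal with $\chi_{2(n-1)},\dots,\chi_{2}$, all mutually independent, reproducing \eqref{eq:Am} exactly.

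The main obstacle is the invariance lemma at the heart of this induction: one must verify that after each deflation the untouched $(n-1)\times(m-1)$ block is again a matrix of independent standard complex Gaussian entries, statistically independent of the chi variables already extracted. This follows by conditioning on the current pivot row (resp.\ column) and invoking the unitary invariance of the Gaussian law, the independence of the rows of $G$, and the independence of a complex Gaussian vector's norm from its direction; the delicate point is that $V_1$ is itself a function of the first row yet is applied to the remaining rows, so the conditioning must be set up carefully. One must also adopt the convention that absorbs a phase into each reflection so that every pivot is a \emph{nonnegative real} number, which is precisely what makes $A$ real and the final identity $AA^T$ (rather than $AA^\dagger$) correct.

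An alternative, purely computational route avoids the probabilistic argument: one computes directly the Jacobian of the change of variables from the entries of $A$ to the spectral data of the Jacobi matrix $T=AA^T$, namely its eigenvalues together with the first components of its eigenvectors, using the classical bijection between symmetric tridiagonal matrices and such data. In this approach the Vandermonde factor $\prod_{i<j}(\lambda_i-\lambda_j)^2$ of \eqref{eq:jointEigWishart} emerges from the Jacobian and integrating out the eigenvector coordinates yields the Laguerre weight; here the hard part is the Jacobian evaluation itself.
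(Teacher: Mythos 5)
Your proposal is correct, but it cannot be compared against an in-paper argument: the paper states this result without proof, citing Theorem~3.4 of Dumitriu and Edelman~\cite{edel2002mat}. What you give as your main route --- Householder bidiagonalization of an $n\times m$ complex Gaussian matrix $G$, with the conditioning argument showing that each deflated block remains i.i.d.\ Gaussian and independent of the chi variables already extracted --- is the classical, elementary derivation of the bidiagonal model, and it is complete as you outline it: the right/left alternation, the phase convention making every pivot real nonnegative, and the identity $U^\dagger(GG^\dagger)U = AA^T$ are exactly the points that need care, and you address all three. Its limitation is that it only works for $\beta=1,2$, where an underlying Gaussian matrix exists; that is all the present paper needs, since the statement concerns the Laguerre \emph{unitary} ensemble. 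By contrast, the proof in the cited reference is your ``alternative, purely computational route'': the change of variables from the bidiagonal entries to the eigenvalues and the first components of the eigenvectors of $AA^T$, with the Vandermonde squared emerging from the Jacobian. That argument is harder (the Jacobian evaluation is the crux) but is valid for every $\beta>0$, which is precisely why the authors can remark in their conclusions that the same method covers random \emph{real} pure states ($\beta=1$) and beyond. One small caveat on conventions: the density~\eqref{eq:jointEigWishart} carries the weight $e^{-\lambda_k/2}$, which corresponds to complex Gaussian entries whose real and imaginary parts are each standard $N(0,1)$; your chi-degrees-of-freedom bookkeeping (norm of the first row $=\chi_{2m}$, etc.) is consistent with exactly that normalization, so you should state it explicitly rather than saying ``standard complex Gaussian'' (which usually means $\mathbb{E}\abs{z}^2=1$ and would introduce a spurious factor of $\sqrt{2}$).
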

The theorem above suggests an alternative procedure to sample $\lambda(\psi_A)$:
\begin{enumerate}
	\item Sample a $n\times n$ bidiagonal real random matrix $A$ as in Theorem~\ref{thm:dum_edel};
	\item Compute the eigenvalues of the tridiagonal matrix $A A^T/\Tr(A A^T)$.
\end{enumerate}
In order to understand to which extent this procedure is  more efficient than the ordinary one, let us briefly analyze the computational time and storage resources needed with each method.
\begin{description}
	\item[{Ordinary method}.] To sample $A$ we need to generate and store $nm=cn^2$ complex numbers (this is the number of complex coordinates of a vector in $\C^n\otimes\C^m$). The matrix multiplication $AA^\dagger$ using a standard algorithm requires $\Or(mn^2)=\Or(c \,n^3)$ complex operations, and the diagonalization of $AA^\dagger/\Tr (AA^\dagger)$ has a computational cost of order $\Or(n^3)$  in the field $\C$. 
\end{description}
\begin{description}
	\item[{Tridiagonal method}.] 
	We just need to generate a $n\times n$
bidiagonal matrix $A$  with independent (but not identically distributed) entries as in Theorem~\ref{thm:dum_edel}. Note that $A$ is sparse, and only $\Or(n)$ real numbers need to be stored. Moreover, the diagonalization of the real tridiagonal matrix $AA^\dagger/\Tr (AA^T)$ requires $\Or(n^2)$ operations in the real field instead of $\Or(n^3)$ complex operations. Crucially, the value $m=cn$ enters only as a parameter of the distributions of the entries. Therefore, the computational time does not scale with $c$.
\end{description}
The numerical advantage of the tridiagonal construction is twofold. First, the storage and computational costs scale as $\Or(n)$ and $\Or(n^2)$ where $n$ is the dimension of party $A$. Second, the dimension $m$ of party $B$  is a parameter of the matrix model; this provides a very effective route to sample local spectra of bipartite states from  highly unbalanced bipartitions ($m$ much larger than $n$). 

Using this method we sampled, for several values of $n$ and $m$, $M=5\cdot 10^5$ pairs of spectra $\lambda(\psi_A)$ and $\lambda(\varphi_A)$ distributed according to~\eqref{eq:jointEigDistHaar}. These pairs correspond to the local spectra of states $\ket{\psi}$, $\ket{\varphi}$ from bipartite Hilbert spaces $\Hi$ of dimension ranging from $4$ (two qubits) up to $2^{20}\cdot10^{6}\simeq10^{12}$.  Using this sample we computed numerically the distribution (and other statistics) of $\Pi\left(\psi \rightarrow \varphi \right)=\Pi(\lambda(\psi_A),\lambda(\varphi_A))$, and draw the conclusions listed in the Introduction.

It is clear that the results presented in this paper would not have been possible without the tridiagonal construction (the ordinary method would have required to sample and diagonalise full matrices of size $\simeq10^{12}$). 
We are not aware of previous works on random pure states that made a systematic use of the tridiagonal method. It is our hope that the present manuscript would popularize the tridiagonal trick as a random matrix technique in quantum information theory.

\section{Volume of LOCC-convertible states, Nielsen's conjecture and persistence exponents} \label{sec:incomparability}

In~\cite{nielsen1999} Nielsen conjectured that the probability that two states chosen at random on $\Hi=\C^n\otimes\C^m$ are locally convertible goes to zero as $n,m\to\infty$. Equivalently, the relative volume of LOCC-convertible states of $\Hi$ vanishes in the limit of large dimension. Nielsen's criterion translates the problem of LOCC-convertibility in terms of the majorization relation between their local spectra. Therefore, the conjecture can be stated as
\begin{conj}[Nielsen~\cite{nielsen1999}] Let $x$ and $y$ be independently distributed according to the probability density~\eqref{eq:jointEigDistHaar}. Then
$P(x\prec y)\to0$ as $n,m\to\infty$.
\end{conj}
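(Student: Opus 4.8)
The plan is to recast the majorization event as a persistence (survival) event for a random bridge, and then bring to bear the scaling theory of fluctuations of linear eigenvalue statistics.

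First I would rewrite $x\prec y$ in terms of partial sums. Writing $S_k(x)=\sum_{j=1}^k x_j^\downarrow$ and setting
\begin{equation}
D_k = S_k(y)-S_k(x)=\sum_{j=1}^k\bigl(y_j^\downarrow-x_j^\downarrow\bigr),\qquad k=0,1,\dots,n,
\end{equation}
the normalization $S_n(x)=S_n(y)=1$ forces $D_0=D_n=0$ deterministically, so $(D_k)$ is a genuine bridge. By the majorization condition~\eqref{def:Majorization},
\begin{equation}
\{x\prec y\}=\{D_k\geq 0\ \text{for all}\ 1\leq k\leq n-1\}.
\end{equation}
Since $x$ and $y$ are i.i.d., $D_k$ has zero mean for every $k$, so $P(x\prec y)$ is exactly the probability that a mean-zero bridge stays nonnegative: a persistence probability.

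Next I would identify the limiting law of the rescaled process. The partial sum $S_k(x)$ is a truncated linear statistic of the eigenvalues of a normalised complex Wishart matrix, whose empirical measure converges to a Marchenko--Pastur law; centred linear statistics of such ensembles satisfy a central limit theorem. I therefore expect the interpolated process $t\mapsto D_{\lfloor nt\rfloor}$ on $t\in[0,1]$, suitably rescaled, to converge to a mean-zero Gaussian bridge $\mathcal{G}(t)$ with $\mathcal{G}(0)=\mathcal{G}(1)=0$. The technical core of this step is to compute the covariance kernel $\mathrm{Cov}\bigl(\mathcal{G}(s),\mathcal{G}(t)\bigr)$ from the two-point function of the ensemble (doubled, since $x$ and $y$ are independent), and to isolate the correlation exponent governing its short-distance structure; I would expect the hard (left) edge of the spectrum at $c=1$, versus the soft edge for $c>1$, to modify this kernel near $t=1$, which is precisely the regime of the smallest eigenvalues that controls the tightest constraint $D_{n-1}\geq 0\iff x_n^\downarrow\geq y_n^\downarrow$.

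Finally I would pass from the bridge to its survival probability. For a mean-zero Gaussian process with the relevant non-Markovian correlations, the probability of remaining positive over a time window of length $T$ decays as $T^{-\theta}$; with effective window $T\sim n$ this gives $P(x\prec y)\sim b\,n^{-\theta}\to 0$, establishing the conjecture and simultaneously explaining the algebraic decay seen numerically, with the change in edge behaviour accounting for the two persistence exponents ($\theta\approx 4/5$ at $c=1$ and $\theta\approx 2/5$ for $c>1$). The hard part lives in these last two steps. Persistence exponents of non-Markovian Gaussian processes are almost never available in closed form, and even proving that the survival probability tends to zero rather than to a strictly positive constant is delicate; one would likely need Slepian-type comparison inequalities, or the rigorous persistence estimates developed for stationary Gaussian processes, together with tight control of the truncated-statistic central limit theorem near the pinned endpoints, where the eigenvalue ordering makes the fluctuations most singular. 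Justifying the interchange of the $n\to\infty$ limit with the infimum over $k$---so that the discrete survival probability really converges to the continuum persistence and is not dominated by boundary layers near $k=1$ and $k=n-1$---is the crux, and is exactly why the statement remains a conjecture backed by numerics rather than a theorem.
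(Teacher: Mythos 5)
Your opening reduction is exactly the one the paper makes in Section~\ref{sec:incomparability}: writing the majorization event as the event that the bridge $D_k=\sum_{j\leq k}(y_j^\downarrow-x_j^\downarrow)$, pinned at $D_0=D_n=0$, stays nonnegative, and observing that the ordering of the components destroys exchangeability so that Sparre Andersen's solvable cases (which would give $\theta=1/2$ or $\theta=1$) do not apply. But you should be aware that the paper does \emph{not} prove this statement --- it is stated as a conjecture, and what Section~\ref{sec:incomparability} and Table~\ref{tab:table1} provide is the persistence mapping as a heuristic plus numerical evidence for the algebraic decay $b/n^\theta$ with $\theta\simeq 4/5$ ($c=1$) and $\theta\simeq 2/5$ ($c>1$). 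Your proposal likewise is not a proof, as you yourself concede in the final paragraph; so the honest verdict is that you have reproduced the paper's reduction and then outlined a program whose essential steps are open.

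Two of those steps have concrete gaps worth naming. First, the functional CLT you invoke is not available in the form you need: the partial sums $S_k(x)=\sum_{j\leq k}x_j^\downarrow$ are \emph{truncated} linear statistics with a sharp cutoff at a quantile of the spectrum (the objects studied in~\cite{Grabsch17}), not smooth linear statistics, and eigenvalue rigidity makes their fluctuations non-diffusive --- the variance does not grow linearly in $k$, so the rescaled bridge does not converge to any standard Gaussian bridge, and the covariance kernel you propose to compute is precisely the unknown that encodes the nontrivial exponents. Second, even granting a Gaussian scaling limit, weak convergence of the path does not give convergence of persistence probabilities: the event $\{D_k\geq 0,\ \forall k\}$ is not a continuity set for a process pinned at zero at both ends (any limit process with locally rough behaviour near the endpoints stays positive with probability zero), and the rate $n^{-\theta}$ lives at a finer scale than any fixed scaling limit can see. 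Quantitative persistence bounds for non-Markovian Gaussian processes (Slepian comparisons, blocking arguments) require uniform control of correlations near the pinned endpoints, exactly where the smallest eigenvalues make the process most singular and where the hard-edge/soft-edge dichotomy enters; this is the step that neither you nor the paper can currently carry out, and it is why the statement remains a conjecture.
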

\par
To justify this conjecture, consider the following argument. For $x, y\in\Delta_{n-1}$, define $\delta=y^\downarrow-x^\downarrow\in\R^n$. Then, the event $x\prec y$ coincides with the event that the process $S_k=\sum_{j=1}^k\delta_j$, starting at $S_0=0$, stays positive
\begin{align}
	P\left(x\prec y\right)=P\left(\sum_{j=1}^k \delta_j\geq0, \,\, \text{for $1\leq k\leq n$}\right) 
	=P\left(\text{$({S}_k)$ stays positive}\right).
\end{align}
Therefore, the probability that $x\prec y$ can be recast as the persistence probability (the probability that a stochastic process does not change sign) of the process  
\begin{align}
S_k&=S_{k-1}+\delta_k,\quad S_0=0, \label{eq:rec}\\
\delta_k&=y_k^\downarrow-x_k^\downarrow\label{eq:steps}
\end{align}
For ``generic'' random vectors $x,y\in\R^n$, it is plausible that $(S_k)$ is unlikely to stay positive at all times if the number of steps $n$ is large.
This statement would be true if $S_k$ were a simple random walk (independent and identically distributed steps $\delta_k$). However, as already remarked by Nielsen~\cite{nielsen1999}, $(S_k)$ is \emph{not} a simple random walk:
\begin{enumerate}
	\item The steps $\delta_k$ are not independent and identically distributed (the components of $x^{\downarrow}_k$ and $y^{\downarrow}_k$ are neither independent nor identically distributed);
	\item The normalization of $x$ and $y$ implies that $(S_k)$ is a `random bridge' starting at $S_0=0$ and ending at $S_n=0$.
\end{enumerate}

It is interesting to compare the process defined by~\eqref{eq:rec}--\eqref{eq:steps} with the classical examples of random walks of the form
\begin{align}
S_k&=S_{k-1}+\eta_k. \label{eq:rec2}
\end{align}
In a general investigation of persistence probability (in modern language) of sums of random variables, Sparre Andersen singled out two exactly solvable cases. Denote by $N_n$ the number of sums $S_1,\dots, S_n$, which are nonnegative.  

\begin{thm}[Sparre Andersen~\cite{Sparre-Andersen}] 
\begin{itemize}
\item[(i)] (Random walks with independent steps) If the steps $\eta_1,\dots,\eta_n$ are independent and identically distributed with continuous distribution symmetric around $0$, then,
\be
P(N_n=k)=(-1)^n\binom{-\frac{1}{2}}{k}\binom{-\frac{1}{2}}{n-k};
\ee
In particular, the persistence probability is
\begin{align}
P\left(\text{$({S}_k)$ stays positive}\right)=P(N_n=n)&=(-1)^n\binom{-\frac{1}{2}}{n}\nonumber\\
&\simeq \frac{1}{\sqrt{\pi n}},\quad \text{as $n\to\infty$}.
\end{align} 
Moreover, the time spent above zero is asymptotically described by the arcsine distribution:
\be
P\left(\frac{N_n}{n}\leq t\right) \rightarrow \frac{2}{\pi}\arcsin\sqrt{t},\text{ as } n\to\infty.
\ee
\item[(ii)] (Random bridges with exchangeable steps).  	If the steps $\eta_1,\dots,\eta_n$ are exchangeable random variables with continuous distribution and ${S}_n=0$ almost surely, then $N_n$ is uniformly distributed over $\{1,\dots,n\}$:
\be
P(N_n=k)=\frac{1}{n};
\ee
 In particular, the persistence probability is
 \be
P\left(\text{$({S}_k)$ stays positive}\right)=P(N_n=n)=\frac{1}{n}.
\ee
\end{itemize}
\end{thm}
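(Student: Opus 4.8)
The plan is to handle the two cases by different combinatorial mechanisms: a direct cyclic-symmetry argument for the exchangeable bridge in (ii), and Sparre Andersen's generating-function factorization for the i.i.d.\ symmetric walk in (i). In both cases I would first discard the null event on which two of the partial sums coincide or some $S_k$ vanishes; this has probability zero because the step law is continuous, and on its complement the values $S_1,\dots,S_{n-1}$ are nonzero and pairwise distinct, so that counting \emph{nonnegative} and \emph{positive} partial sums agree up to the deterministic endpoint $S_n=0$. I would present (ii) first, as it is self-contained.

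For (ii), I would exploit that exchangeability implies invariance of the law of $(\eta_1,\dots,\eta_n)$ under cyclic shifts. The partial sums of the $r$-th rotation are the numbers $S_j-S_r$ as $j$ runs over a full period, so the count $N_n^{(r)}$ of nonnegative partial sums of the $r$-th rotation equals $\#\{j\in\{1,\dots,n\}\colon S_j\ge S_r\}$. Since $S_n=S_0=0$, the set $\{S_1,\dots,S_n\}$ coincides with the $n$ almost surely distinct values $S_0,\dots,S_{n-1}$, and therefore $N_n^{(r)}$ is exactly the rank of $S_r$ among them. Consequently, for each realization, as $r$ ranges over $\{0,\dots,n-1\}$ the quantity $N_n^{(r)}$ takes every value in $\{1,\dots,n\}$ exactly once. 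Summing the indicator $\mathbf{1}(N_n^{(r)}=k)$ over $r$ gives $1$ deterministically; taking expectations and using that cyclic invariance forces $N_n^{(r)}$ to have the same law as $N_n=N_n^{(0)}$, I obtain $n\,P(N_n=k)=1$, i.e.\ $P(N_n=k)=1/n$. The persistence probability is the case $k=n$.

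For (i), I would pass to generating functions and invoke Sparre Andersen's combinatorial lemma, which yields the factorization
\begin{equation}
\sum_{n\ge0}E\left[s^{N_n}\right]t^n=\left(\sum_{n\ge0}P(S_1>0,\dots,S_n>0)(st)^n\right)\left(\sum_{n\ge0}P(S_1\le0,\dots,S_n\le0)t^n\right),
\end{equation}
into a positive-excursion series (in the variable $st$) and a nonpositive-excursion series. Writing $a_n=P(S_1>0,\dots,S_n>0)$, the accompanying Spitzer-type identity reads $\sum_n a_n t^n=\exp\bigl(\sum_{n\ge1}\tfrac{t^n}{n}P(S_n>0)\bigr)$, and likewise for the nonpositive factor with $P(S_n\le0)$. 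Here the hypotheses do the real work: symmetry and continuity of the step law make $S_n$ symmetric with no atom at $0$, so $P(S_n>0)=P(S_n\le0)=\tfrac12$ for every $n$, and both exponentials collapse to $(1-t)^{-1/2}$. Hence the generating function equals $(1-st)^{-1/2}(1-t)^{-1/2}$, and extracting the coefficient of $t^n$ gives $P(N_n=k)=a_k a_{n-k}$ with $a_j=(-1)^j\binom{-1/2}{j}=\binom{2j}{j}4^{-j}$, which is precisely $(-1)^n\binom{-1/2}{k}\binom{-1/2}{n-k}$. The persistence probability is $k=n$, where $a_n=\binom{2n}{n}4^{-n}\sim(\pi n)^{-1/2}$ by Stirling, and the arcsine law follows from $P(N_n=k)\sim\pi^{-1}[k(n-k)]^{-1/2}$ interpreted as a Riemann sum, $P(N_n/n\le t)\to\pi^{-1}\int_0^t[u(1-u)]^{-1/2}\ud u=\tfrac{2}{\pi}\arcsin\sqrt{t}$.

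The main obstacle is the combinatorial lemma underlying (i): for each fixed trajectory it expresses the number of positive partial sums through the cyclic/ladder structure of the path, and establishing it cleanly (rather than merely citing Sparre Andersen or Spitzer) is the only genuinely nontrivial step. Once it is in hand, the symmetric-continuous hypothesis trivializes both factors via $P(S_n>0)=\tfrac12$ and the remainder reduces to coefficient extraction and elementary asymptotics. Part (ii), by contrast, requires only the cyclic-invariance observation, which is why I would deploy it first as a warm-up before developing the generating-function machinery for (i).
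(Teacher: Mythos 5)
Your proposal cannot be checked against a proof in the paper, because the paper offers none: this theorem is stated purely as background and attributed wholesale to Sparre Andersen's 1954 article, so any argument you supply is already more than the text provides. Judged on its own terms, your part (ii) is correct and complete: the cyclic-shift argument --- rotations of an exchangeable bridge are equidistributed, the partial sums of the $r$-th rotation are $S_j-S_r$ over a full period, so $N_n^{(r)}$ is the rank of $S_r$ among the a.s.\ distinct values $S_0,\dots,S_{n-1}$, and these ranks exhaust $\{1,\dots,n\}$ exactly once per realization --- is precisely the classical proof, and your bookkeeping (using $S_n=S_0=0$ to identify $\{S_1,\dots,S_n\}$ with $\{S_0,\dots,S_{n-1}\}$, discarding the null event of ties) is sound. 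For part (i) your route is the standard one and every step you actually carry out is correct: symmetry plus atomlessness of $S_n$ forces $P(S_n>0)=P(S_n\le 0)=\tfrac12$, both Spitzer exponentials collapse to $(1-t)^{-1/2}$, coefficient extraction gives $P(N_n=k)=a_k a_{n-k}$ with $a_j=\binom{2j}{j}4^{-j}=(-1)^j\binom{-1/2}{j}$, Stirling gives $a_n\sim(\pi n)^{-1/2}$, and the Riemann-sum passage to $\frac{2}{\pi}\arcsin\sqrt{t}$ is standard. The one caveat --- which you flag yourself --- is that the factorization of $\sum_{n}E[s^{N_n}]t^n$ into positive-excursion and nonpositive-excursion series, together with the Spitzer-type identity for each factor, is invoked rather than proven; since that combinatorial lemma \emph{is} the substance of Sparre Andersen's theorem, your part (i) as written is an outline resting on a cited black box, of the same epistemic status as the paper's own citation. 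To make it self-contained you would need to prove the factorization (e.g.\ via Feller's cyclic-lemma derivation) or establish Spitzer's identity from scratch; until then, only part (ii) constitutes a genuine proof.
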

In various discrete time processes, the persistence probability turns out to decay algebraically at large times, $P\left(\text{$({S}_k)$ stays positive}\right)\simeq b/n^{\theta}$~\cite{eisenberg1992variation}. The persistence exponent $\theta$ carries information about the full history of the process. For this reason, $\theta$ is usually a nontrivial exponent whose calculation is typically very challenging for non-Markovian processes. For an extensive review, see~\cite{Bray2013}. 
The two classes covered by the Sparre Andersen theorem, are examples of exactly solvable models where the persistence exponent and the prefactor can be computed explicitly ($\theta=1/2$, $b=1/\sqrt{\pi}$ for random walks, and $\theta=1$, $b=1$ for random bridges).

It is interesting to note that in our original problem, if we drop the ordering on the local spectra, then the persistence probability problem changes dramatically and it becomes exactly solvable. Specifically, if we replace~\eqref{eq:steps} with  $\widetilde{\delta}_k=y_k-x_k$, then the process $(S_k)$ becomes a random bridge with exchangeable steps (the common distribution $g_{n,m}$ of $x$ and $y$ is invariant under permutation) and the Sparre Andersen theorem applies. Hence, in such a simplified problem we can show that the persistence probability goes to zero for large $n$ 
(and how fast it goes to zero). Needless to say, majorization is expressed through the ordered vectors $x^{\downarrow}$ and $y^{\downarrow}$, and thus we are interested in the process $(S_k)$ with steps $\delta_k$ rather than the exchangeable steps $\widetilde{\delta}_k$.

To get some insight, we studied numerically for the process~\eqref{eq:rec}--\eqref{eq:steps} the number $N_n$ of sums $S_1,\dots, S_n$, which are nonnegative, i.e.\ the total time spent by the process $(S_k)$ above $0$.  The distribution of $N_n$ is summarised  in \figurename{ \ref {fig:inconvertibility}}, and compared with the uniform distribution (of classical random bridges) and the arcsine distribution (of classical random walks). We see that in the case $c=1$ (balanced partitions $m=n$), the distribution of $N_n$ is close to uniform, but  $(S_k)$ is less likely to cross $0$  compared to a random bridge with exchangeable steps. Indeed, we found numerically that the persistence probability decays with exponent $\theta\simeq 4/5$ (smaller then the exponent $\theta=1$ of classical random bridges).  As $c$ increases, the distribution of $N_n$ looks more and more similar to the arcsine distribution. Again however, $(S_k)$ is less likely to cross $0$  compared to a random walk with independent steps; indeed, the persistence exponent for $c>1$ is $\theta\simeq 2/5$, again smaller that the value $\theta=1/2$ of classical random walks.

In summary, we conclude that the ordering of $x^{\downarrow}$ and $y^{\downarrow}$ defining the steps~\eqref{eq:steps} of the process $(S_k)$, makes the persistence exponent $\theta$ nontrivial (and hard to compute). Nevertheless, there is a sharp distinction in the persistence probability corresponding to balanced ($c=1$) and unbalanced ($c>1$) partitions. This change of behaviour at $c=1$ is consistent with the aforementioned symmetry of the problem under exchange of $m$ and $n$, i.e.\ the symmetry $c\leftrightarrow c^{-1}$.
In the following section, we will make the point that the existence of two distinct persistence exponents is related to the completely different asymptotic behaviour of the smallest eigenvalue $\lambda(\psi_A)_n^{\downarrow}$ when $c$ is strictly larger, or equal to $1$.

\begin{figure}[t!]
	\centering
	\begin{subfigure}[b]{0.49\textwidth}
		\includegraphics[width=\textwidth]{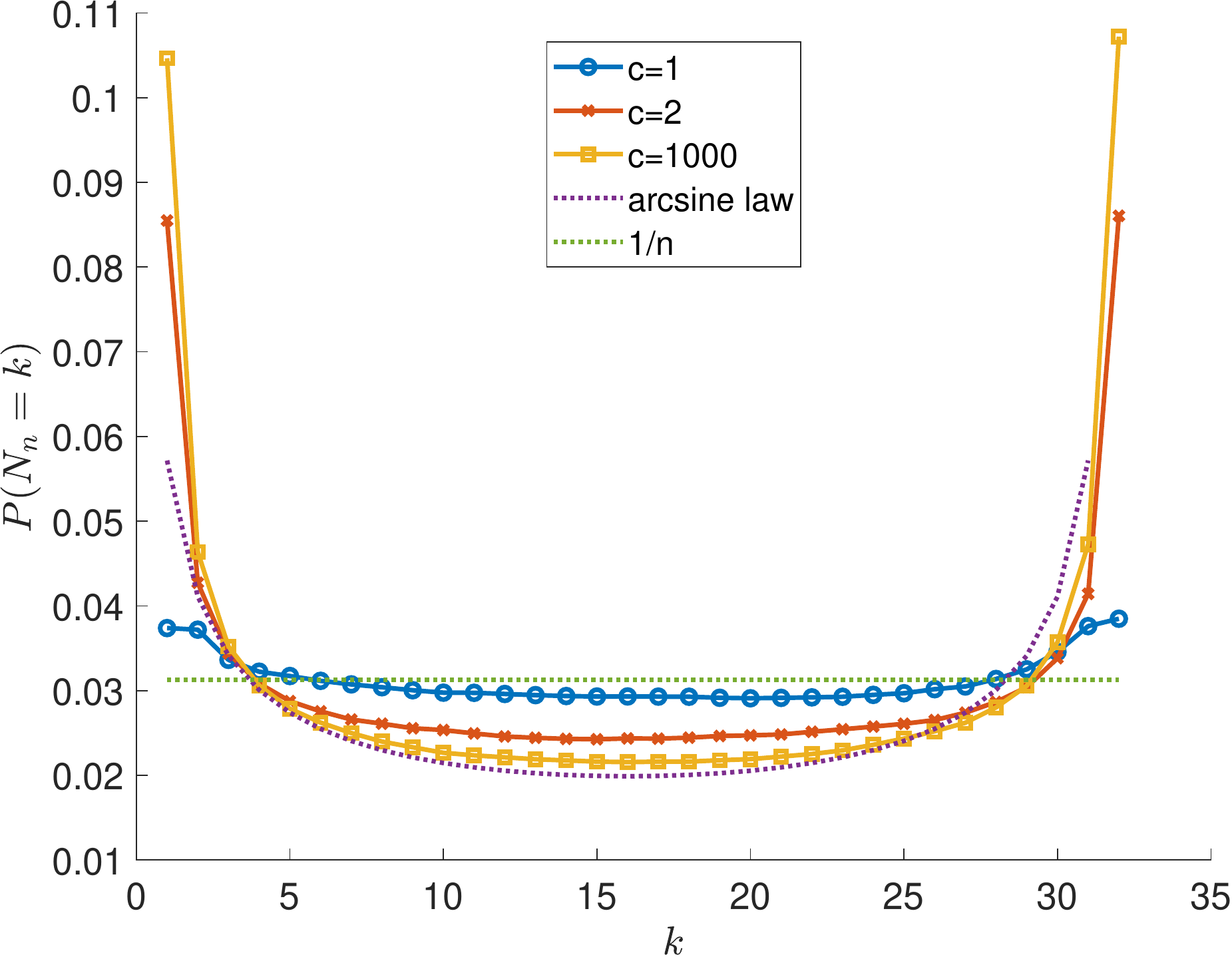}
		\label{fig:ClausesPdf}
	\end{subfigure}%
	~
	\begin{subfigure}[b]{0.49\textwidth}
		\includegraphics[width=\textwidth]{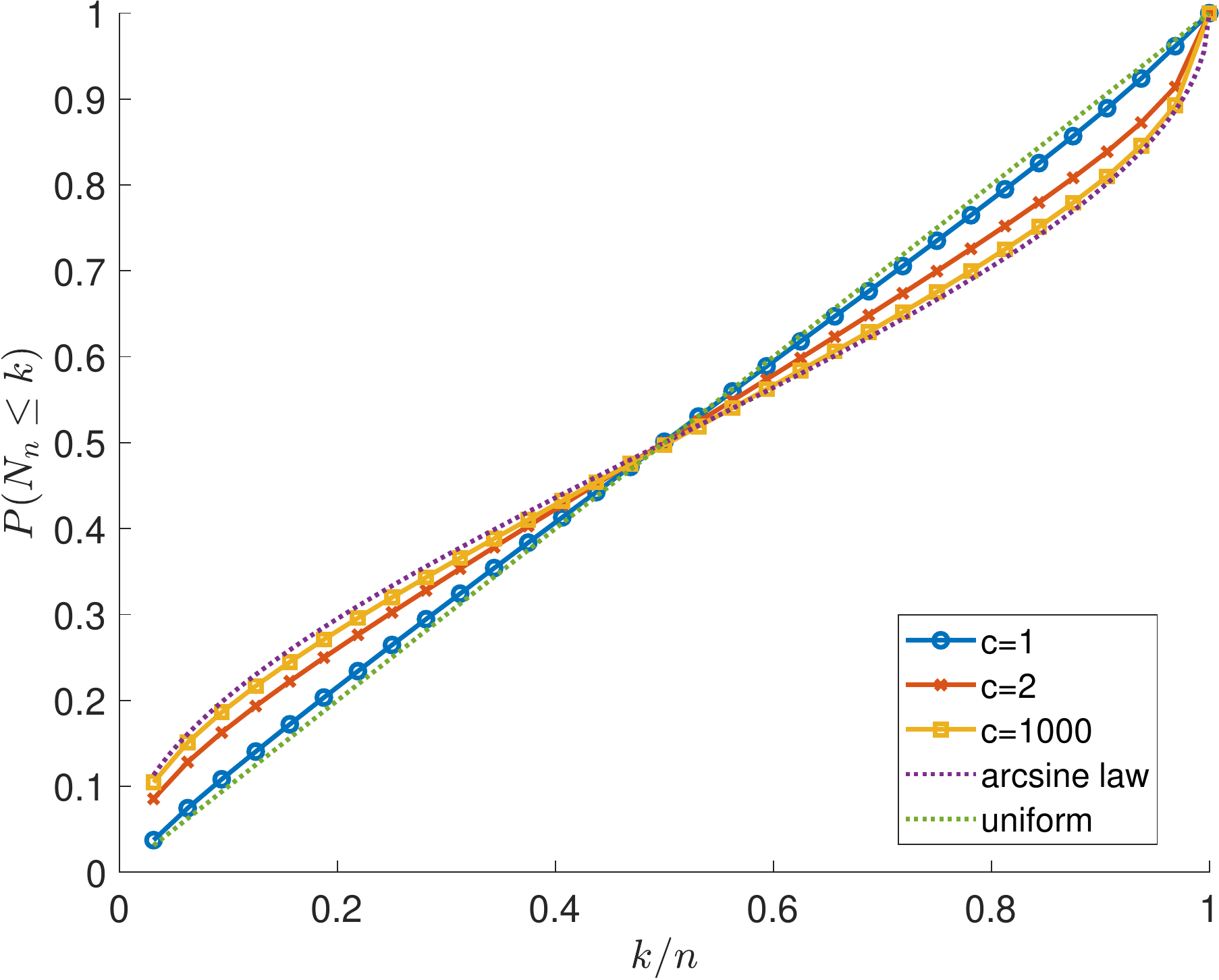}
		\label{fig:ClausesCum}
	\end{subfigure}
	\caption{The probability distribution of $N_n$, the number of times the process $(S_k)$ defined by~\eqref{eq:rec}--\eqref{eq:steps} is above $0$. For comparison, we have inserted the uniform distribution over $l=1,\dots,n$. Here $n=32$.}
	\label{fig:inconvertibility}
\end{figure}

\section{Distribution of the maximal success probability}
\label{sec:results}

In the previous section we considered the probability  
$$P(\ket{\psi}\rightarrow\ket{\varphi})=P\left(\Pi\left(\psi \rightarrow \varphi \right)=1\right)$$ for pairs of random states independent and uniformly distributed on the unit sphere of $\Hi=\C^n\otimes\C^m$. By Nielsen's criterion (Theorem~\ref{thm:Nielsen}), this is the probability $P(x\prec y)$ that two independent probability vectors distributed according to $g_{n,m}$ satisfy the majorization relation; recall that $P(x\prec y)=P(\Pi(x,y)=1)$. In this section we discuss the full distribution $F(p)=P(\Pi(x,y)\leq p)$ of $\Pi(x,y)$. By Vidal's criterion (Theorem~\ref{ref:thm:Vidal}), $F(p)$ is the probability distribution of $\Pi\left(\psi \rightarrow \varphi \right)$, where $\ket{\psi}$ and $\ket{\varphi}$ are pure states picked at random. 

Having established that
\begin{equation}\label{eq:probmaxP=1}
	P\left(\Pi\left(\psi \rightarrow \varphi \right)=1\right)=	P\left(\Pi(x,y)=1\right)=F(1)-F(1^{-})\rightarrow 0 \qquad \text{as }n\rightarrow \infty,
\end{equation}
we studied numerically the asymptotic behaviour of the full distribution of $\Pi(x,y)$. A first natural step in the analysis is the calculation of expectation value 

\begin{equation}
	\mathbb{E}\bigl[\Pi\left(\psi \rightarrow \varphi \right)\bigr] =\mathbb{E}\bigl[ \Pi(\lambda(\psi_A),\lambda(\varphi_A)) \bigr].
	\label{eq:avg_Vid}
\end{equation}
Eq.~\eqref{eq:avg_Vid} is the average maximal probability of successful LOCC conversion between two states picked at random. The behaviour of this expectation value as a function of $n$ and $m$ is summarised in \figurename{ \ref  {fig:Vidal}}.

\begin{figure}[t!]
	\centering
	\includegraphics[width=.6\textwidth]{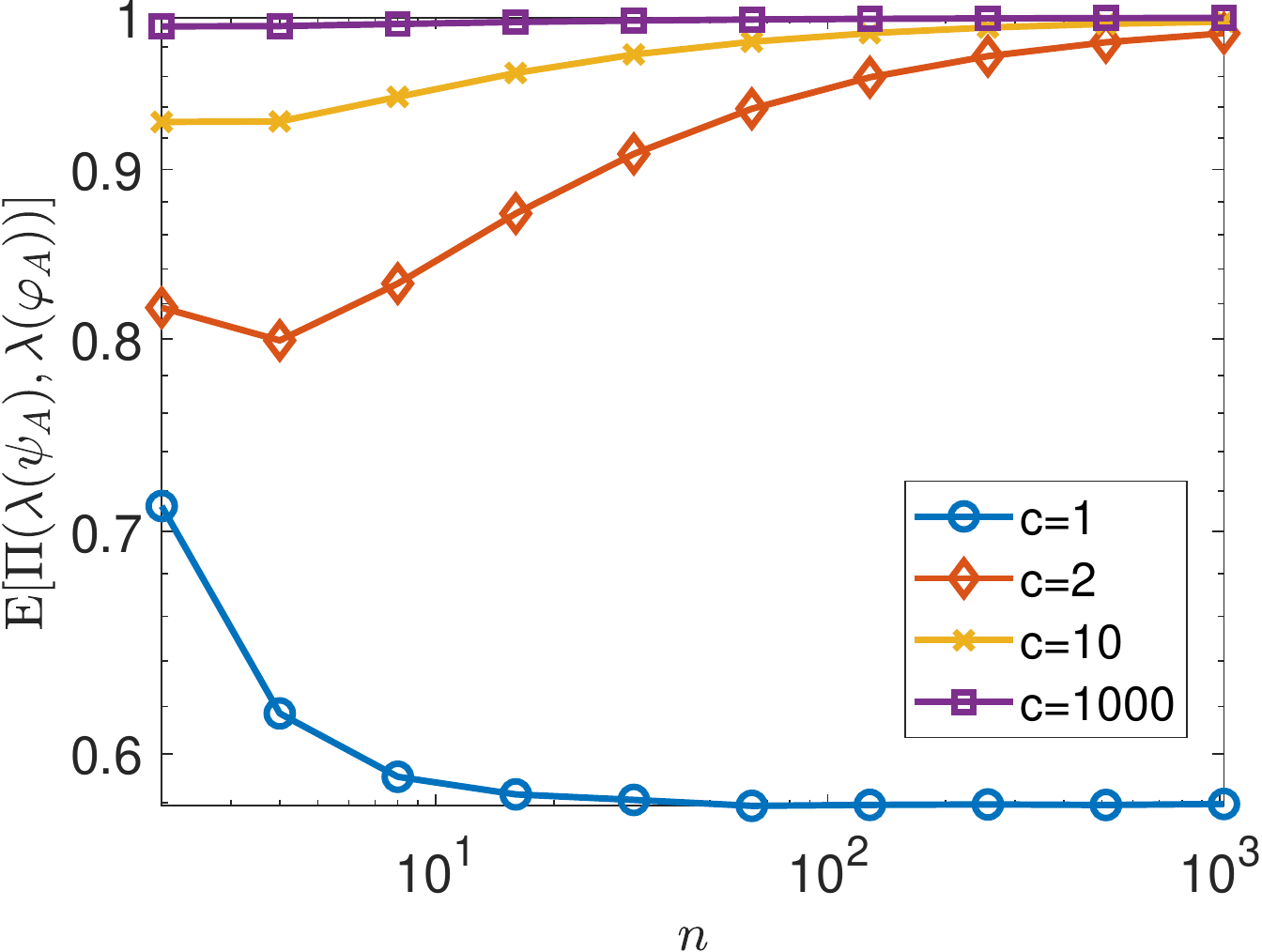}
	\caption{Average value $\mathbb{E}[\Pi\left(\psi \rightarrow \varphi \right)]$ as a function of $n=\dim\Hi_A$. }	
	\label{fig:Vidal}
\end{figure}
We note an interesting and surprising asymptotic behaviour for unbalanced bipartitions ($c>1$). As $n$ and $m$ increase with $c>1$ fixed,  the average maximal probability of conversion eventually approaches the limit value $1$. In formulas:
\begin{equation}\label{eq:averagemaxP}
	\mathbb{E}\bigl[\Pi\left(\psi \rightarrow \varphi \right) \bigr]=\int_0^1 p f(p)\dd p \rightarrow 1 \qquad \text{as } n\rightarrow \infty
\end{equation} 
This is at first surprising given that the probability of successful conversions goes to zero~\eqref{eq:probmaxP=1}. 
Equations~\eqref{eq:averagemaxP} and~\eqref{eq:probmaxP=1} disclose 
an interesting structure of the density $f(p)$ of the maximal probability of conversion. Since for any finite $n$ there is a nonvanishing probability to have an exact conversion, i.e.\ $P(\ket{\psi}\rightarrow \ket{\varphi})=F(1)-F(1^{-})\neq 0$, there is a singular part of the distribution at $p=1$ that goes to zero as $n$ grows. At the same time, equation~\eqref{eq:averagemaxP} tells that there is a continuous part $f_\mathrm{cont}(p)$ which concentrates around (on the left of) $p=1$ as $n\rightarrow \infty$. In the next subsection we will compute exactly this distribution for $n=2$ and $m$ generic to explicitly show  this structure. The calculation for generic $n$ is more complicated, but a numerical computation of this distribution confirms a similar structure. 

We conclude this section by stressing the physical significance of equations~\eqref{eq:averagemaxP} and~\eqref{eq:probmaxP=1}: the relative volume of pairs of pure states that can be \textit{successfully} converted  one into another using local operations vanishes in the asymptotic limit of large dimension; on the other hand, if the bipartition is unbalanced ($c>1$), then the overwhelming majority of states are LOCC-convertible if we allow an arbitrarily small probability of failure in the local conversion. 

\subsection{Exact formulae for $n=2$}
	For $n=2$,  $x\prec y\iff x_1^\downarrow\leqslant y_1^\downarrow$. Since $x$ and $y$ are independent and identically distributed with continuous density $g_{2,m}$, we have
	\begin{equation}\label{eq:remark}
	P(x \prec y)=\frac{1}{2},\quad \text{for $n=2$}.
	\end{equation}

We now compute the probability density $f(p)=F'(p)$ of $\Pi(x,y)$ for $n=2$ and $m\geq n$. 
For $n=2$,
\begin{equation}
	\Pi(x,y)=\min\left\{1,\frac{x_2^\downarrow}{y_2^\downarrow}\right\},
\end{equation}
the function $\Pi(x,y)$ only depends on the ratio of the smallest eigenvalues $x_2^\downarrow$ and $y_2^\downarrow$, and we have
\begin{align}\notag
f(p)&=\int_{\Delta_1} \dd x \int_{\Delta_1} \dd y\, g_{2,m}(x) g_{2,m}(y)\delta\left(p-\min\left\{1,\frac{x_2^\downarrow}{y_2^\downarrow}\right\}\right)\\
\notag
&=\int_0^{1/2} \dd s \int_0^{1/2} \dd t \, q_{2,m}(s) q_{2,m}(t)\delta\left(p-\min\left\{1,\frac{s}{t}\right\}\right)
\end{align}
where $q_{2,m}(s)=2g_{2,m}(1-s,s)$ is
\begin{equation}\label{eq:q_m}
q_{2,m}(s)= \frac{\Gamma(2m)}{\Gamma(m)\Gamma(m-1)}(s-s^2)^{m-2}(1-2s)^2.
\end{equation}
Then, by splitting the integral we get
\begin{align}\notag
	f(p)&=\int_0^{1/2} \dd t \int_t^{1/2} \dd s  \, q_{2,m}(s) q_{2,m}(t)\delta(p-1)+\\\notag
	&\qquad+ \int_0^{1/2} \dd t \int_{0}^{t} \dd s  \, q_{2,m}(s) q_{2,m}(t)\delta\left(p-\frac{s}{t}\right)\\%\notag
	&=P(x\succ y) \delta(p-1) +\int_0^{1/2} t\, q_{2,m}(p t) q_{2,m}(t)\dd t \,  1_{[0,1]}(p).
\end{align}
We see then that $f(p)$ is the sum of a singular part and a continuous density
\begin{equation}
	f(p)= f_{\mathrm{cont}}(p)+\frac{1}{2} \delta(p-1), 
\label{eq:n=2Vid.2}
\end{equation}
where we used~\eqref{eq:remark} and we have defined
\begin{equation}
f_{\mathrm{cont}}(p)=1_{[0,1]}(p)\int_{0}^{1/2} t\; q_{2,m}(t)q_{2,m}(pt) \dd t.
\label{eq:n=2Vid2.2}
\end{equation}
Of course, $f_{\mathrm{cont}}(p)\geq0$ and $\int_{0}^1 f_{\mathrm{cont}}(p)dp=1/2$. In fact, for $0\leq p \leq 1$, $ f_{\mathrm{cont}}(p)$ is a polynomial in $p$ of degree $2m-2$. Explicit formulae for the first values of $m$ are
\begin{align}
f_{\mathrm{cont}}(p)&= \frac{3}{20} \left(p^2-4 p+5\right), &m=2, \label{eq:fctnAnalytical1}\\
f_{\mathrm{cont}}(p)&=-\frac{5}{224}p \left(13 p^3-80 p^2+165 p-120\right), &m=3,  \label{eq:fctnAnalytical2}\\
f_{\mathrm{cont}}(p)&=\frac{105}{36608 }p^2 \left(139 p^4-1148 p^3+3549 p^2-4888 p+2574\right), &m=4.  \label{eq:fctnAnalytical3}%\\
\end{align}
In Figure~\ref {fig:VidalThNum} we show a comparison between  numerical results and the analytical expressions \eqref{eq:fctnAnalytical1}--\eqref{eq:fctnAnalytical3}.
It is not hard to show that for large $m$ the continuous density $f_{\mathrm{cont}}(p)$ concentrates at $p=1$, $\lim_{m\to\infty}f_{\mathrm{cont}}(p)=\frac{1}{2}\delta(1-p)$, so that 
\begin{equation}
\lim_{m\to\infty}F(p)=\theta(p-1),\quad\text{for $n=2$}.
\end{equation}

\begin{figure}[t!]
	\centering
	\includegraphics[width=.33\textwidth]{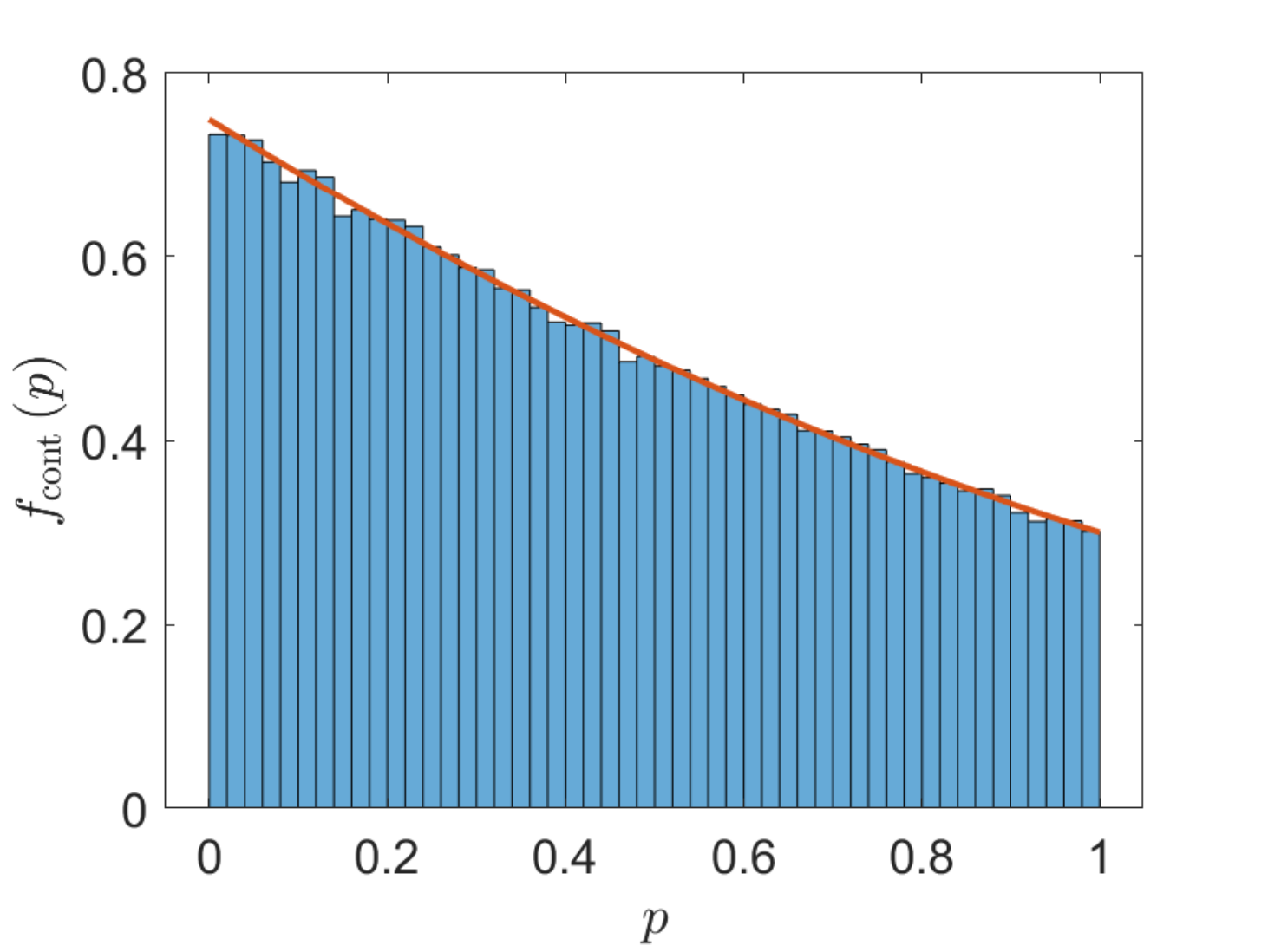}~
	\includegraphics[width=.33\textwidth]{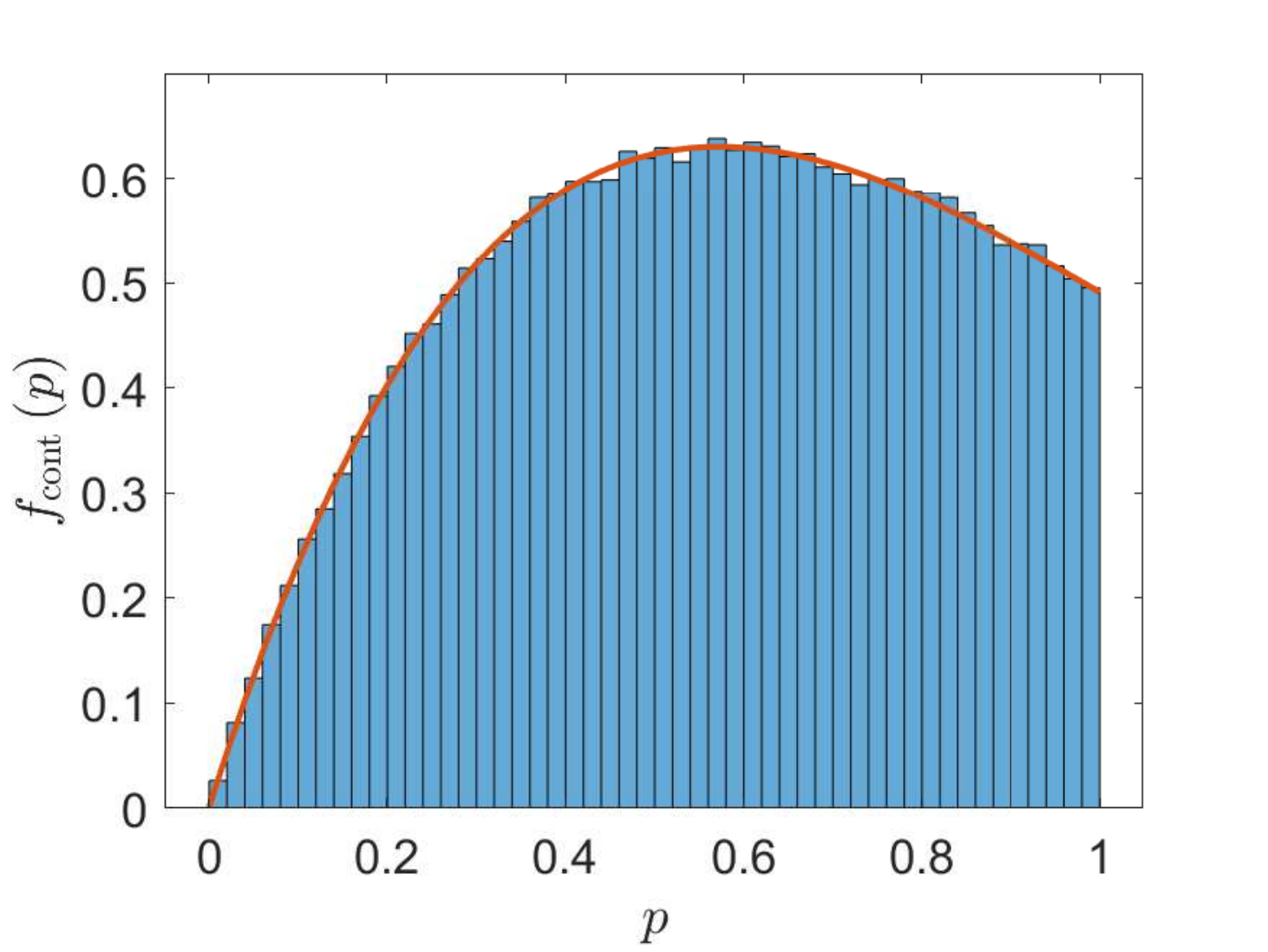}~
	\includegraphics[width=.33\textwidth]{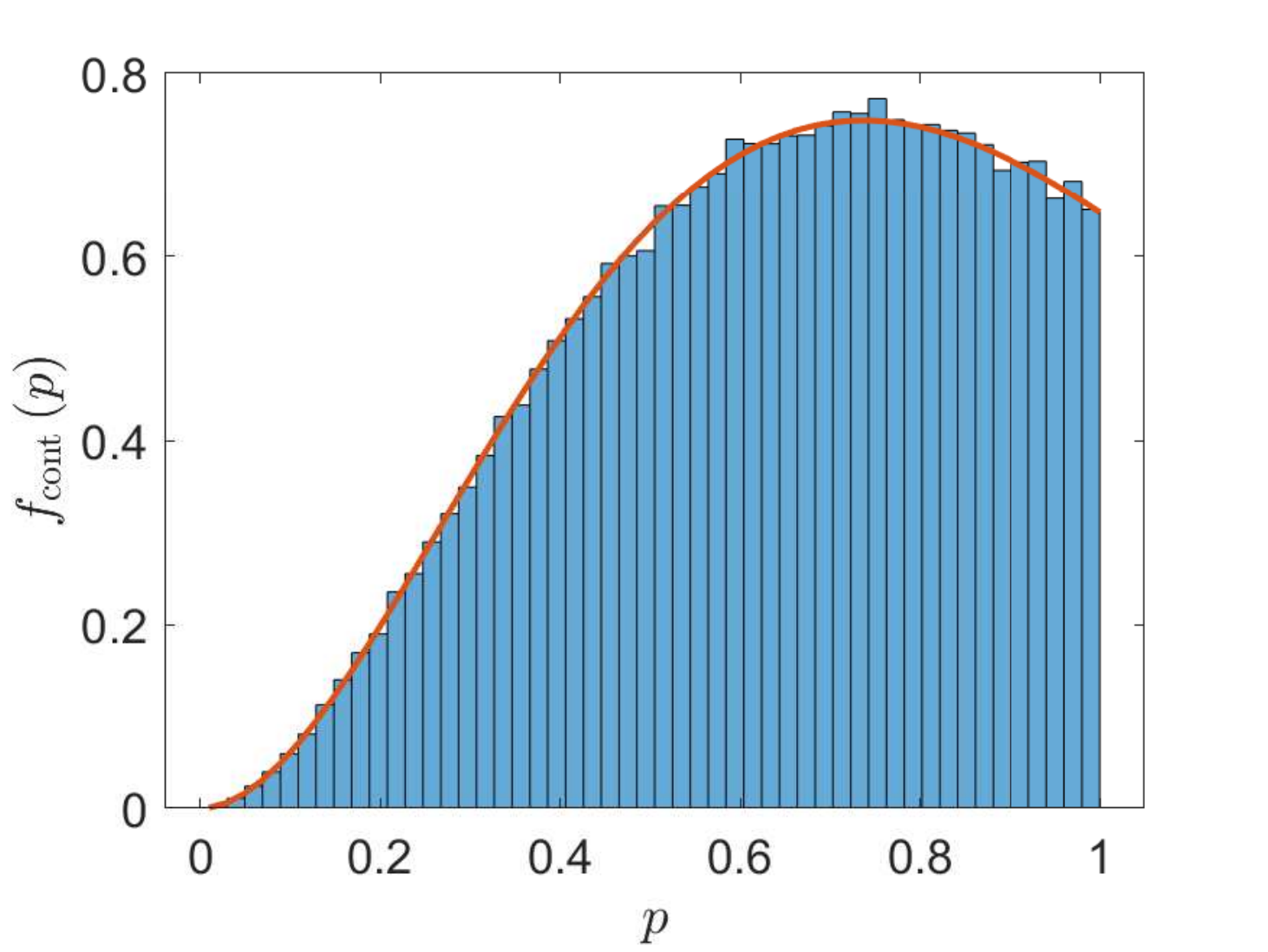}
	\caption{Comparison between numerical results (blue histograms) and the probability density functions (orange curves) $f_{\mathrm{cont}}(p)$ in equations~\eqref{eq:fctnAnalytical1}--\eqref{eq:fctnAnalytical3}.}
	\label{fig:VidalThNum}
\end{figure}

\subsection{Distribution function of $\Pi(x,y)$, concentration of measure and scaling limits}

A numerical computation of the distribution of $\Pi(\lambda(\psi),\lambda(\varphi))$ for $n>2$ has been performed, whose results are shown in Figures~\ref{fig:VidalDist_c1} and~\ref{fig:VidalDist_c2} for $c=1$ and $c>1$ respectively. These numerical results show that the structure of the probability density function~\eqref{eq:n=2Vid.2} is preserved for $n>2$, i.e.\ we have
\begin{equation}
	f(p)=f_\mathrm{cont}(p)+\kappa \delta(p-1)
	\label{eq:dens_dec}
\end{equation}
for every $n$. Clearly the weight of the singular part is associated to the probability of having a successful conversion strategy, i.e.\ $P(\Pi(x,y)=1)=\kappa>0$, and we retrieve in Figures~\ref{fig:VidalDist_c1} and~\ref{fig:VidalDist_c2} that $\kappa\rightarrow0$ as $n\rightarrow\infty$. A remarkable difference can be noted between the balanced  $(c=1)$ and unbalanced $(c>1)$ bipartition. In the large $n$ limit, while for $c=1$ $f_{\mathrm{cont}}$ is supported on the whole interval $[0,1]$, for $c>1$ we see (as announced before) that the support of $f_{\mathrm{cont}}$ concentrates around $p=1$. This difference explains the different behaviour observed in \figurename{ \ref  {fig:Vidal}} between the balanced $(c=1)$ and unbalanced $(c>1)$ cases.
\begin{figure}[t]
	\centering
\begin{subfigure}{0.49\textwidth}
	\centering
	\includegraphics[width=\textwidth]{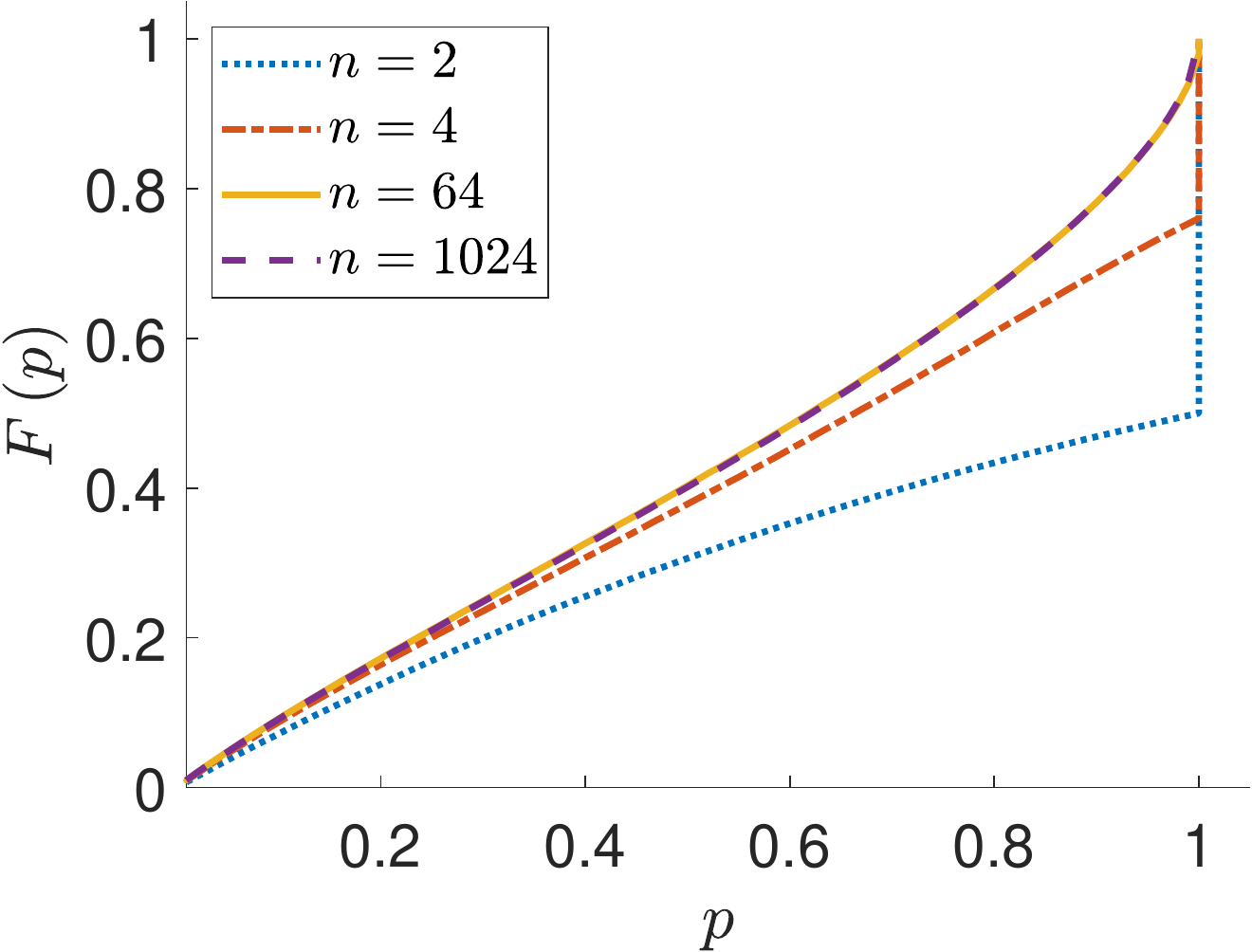}
	\caption{$c=1$}
	\label{fig:VidalDist_c1}
\end{subfigure}%
~
\begin{subfigure}{0.49\textwidth}
	\centering
	\includegraphics[width=\textwidth]{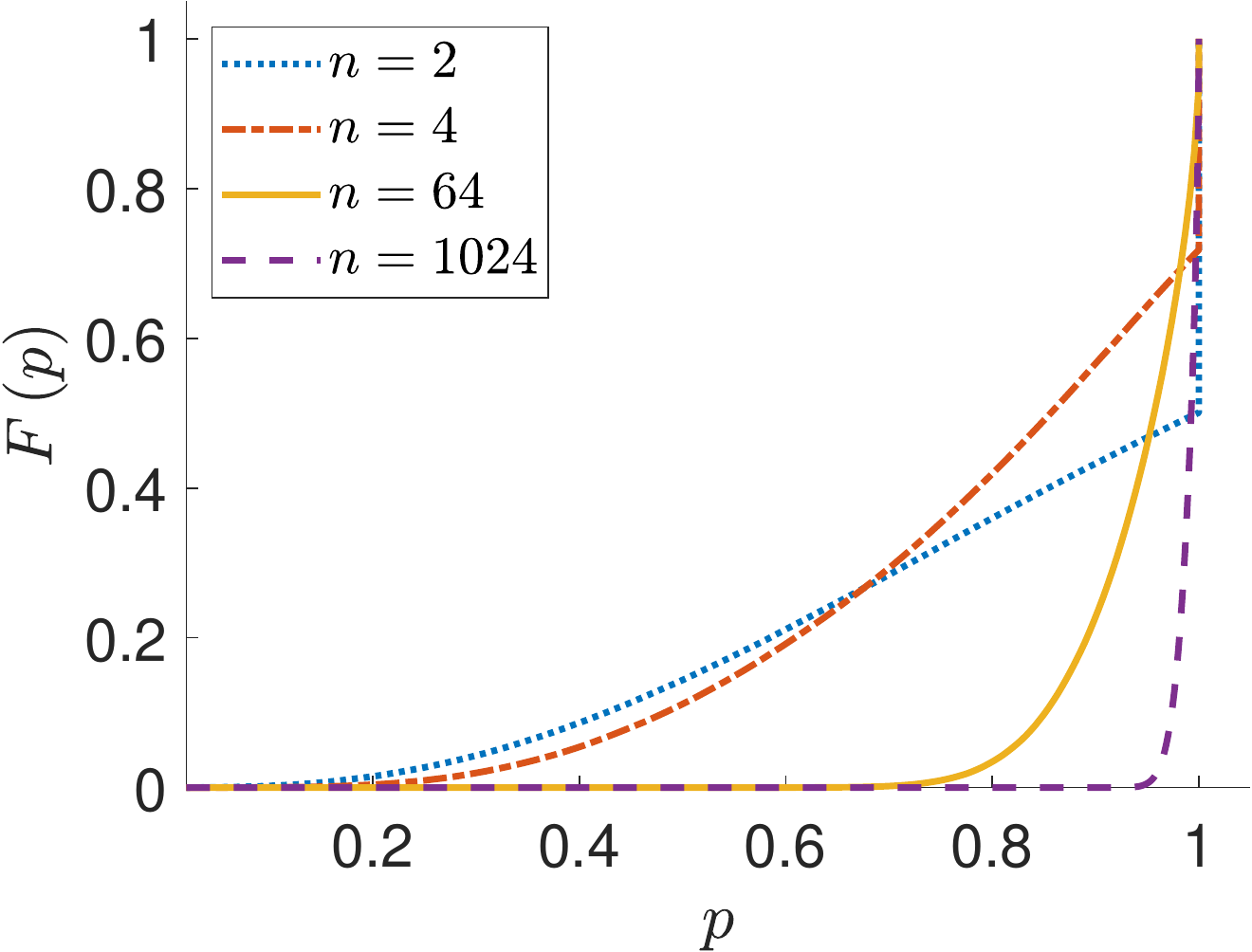}
	\caption{$c=2$}
	\label{fig:VidalDist_c2}
\end{subfigure}%
\caption{Distribution function $F(p)$ of $\Pi\left(\psi \rightarrow \varphi \right)$. The decomposition~\eqref{eq:dens_dec} of the corresponding probability density function $f(p)$  into a singular and a continuous component can be seen from the presence of a jump at $p=1$. In both the balanced and unbalanced cases the height of the jump vanishes as $n$ grows. 
}
\label{fig:VidalDist}		
\end{figure}

We conjecture that the distribution of $\Pi(\lambda(\psi_A),\lambda(\varphi_A))$  is driven by the smallest eigenvalues of $\psi_A$ and $\varphi_A$, i.e.\ $\lambda(\psi_A)_n^{\downarrow}$, $\lambda(\varphi_A)_n^{\downarrow}$. 
Recall that
\begin{align}
\Pi(\lambda(\psi_A),\lambda(\varphi_A))
&=\min\left\{ \frac{\lambda(\psi_A)_n^{\downarrow}}{ \lambda(\varphi_A)_n^{\downarrow}}, \frac{\lambda(\psi_A)_n^{\downarrow}+\lambda(\psi_A)_{n-1}^{\downarrow}}{ \lambda(\varphi_A)_n^{\downarrow}+\lambda(\varphi_A)_{n-1}^{\downarrow}},\dots, \frac{\lambda(\psi_A)_n^{\downarrow}+\cdots+\lambda(\psi_A)_{2}^{\downarrow}}{ \lambda(\varphi_A)_n^{\downarrow}+\cdots+\lambda(\varphi_A)_{2}^{\downarrow}},1\right\}
\label{eq:PIdef.2}
\end{align}
 If each component of  $\lambda(\psi_A)^{\downarrow}$, $\lambda(\varphi_A)^{\downarrow}$ does not vary too much with respect to its mean, then all the sums in the numerators are very close to the sums in the denominators of~\eqref{eq:PIdef.2}, implying that the minimum  of their ratios is unlikely to be much less than one. In order to test this idea, we computed numerically the variance $\operatorname{Var}[\lambda(\psi_A)_k^{\downarrow}]$ and the mean $\mathbb{E}[\lambda(\psi_A)_k^{\downarrow}]$  of the $k$-th largest eigenvalue
 of the density matrix $\psi_A$, and its relative fluctuation 
 \be
 \sqrt{\operatorname{Var}[\lambda(\psi_A)_k^{\downarrow}]}\,/\,\mathbb{E}[\lambda(\psi_A)_k^{\downarrow}].
 \label{eq:rel_fluct}
 \ee 
	In  \figurename{ \ref  {fig:moments}} we plotted the relative fluctuations~\eqref{eq:rel_fluct} for large $n$ and several values of $c\geq1$. 
\begin{figure}
	\qquad \qquad
	\includegraphics[width=.7\textwidth]{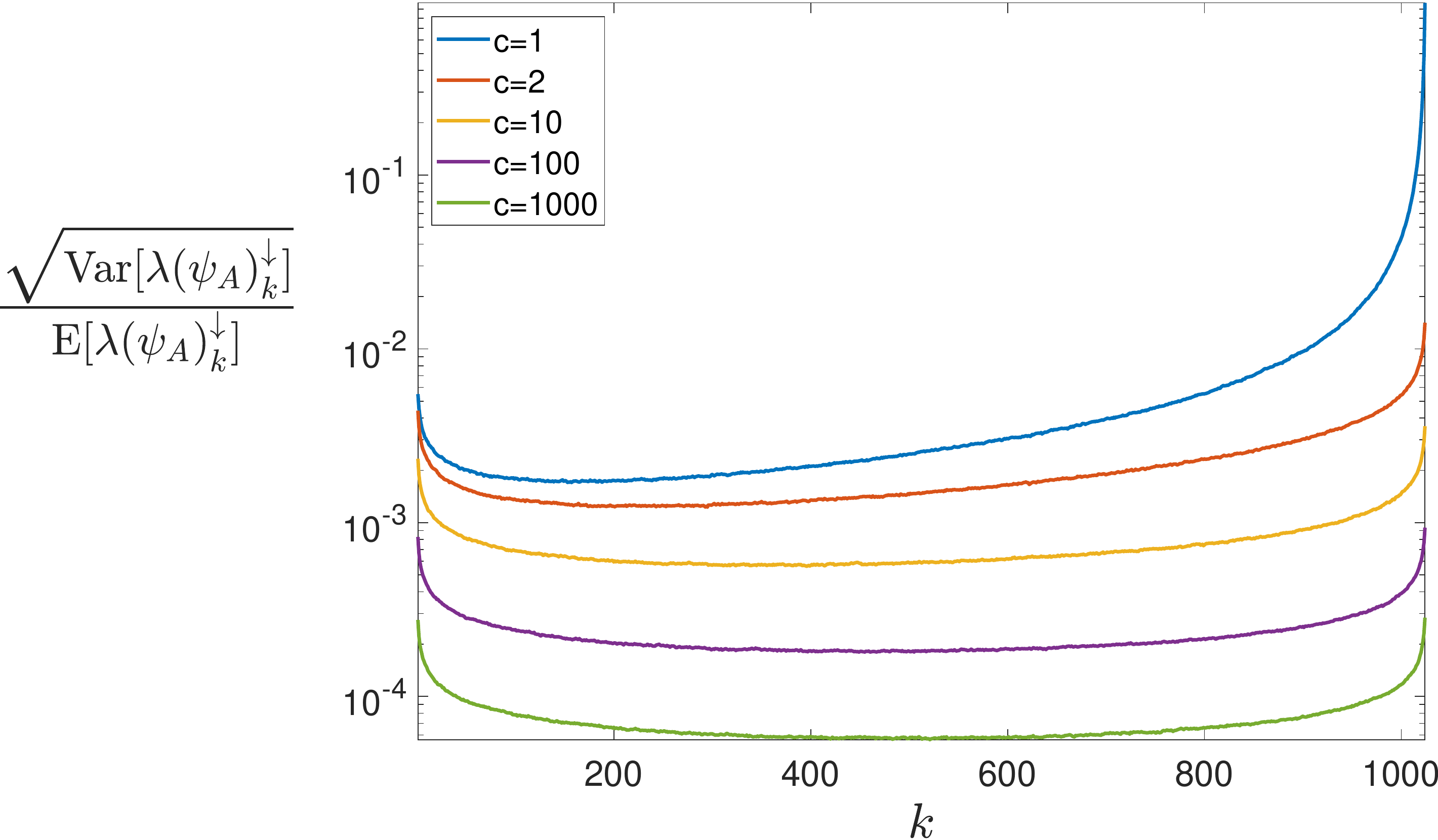}
	\caption{Relative fluctuations $\sqrt{\operatorname{Var}[\lambda(\psi_A)_k^{\downarrow}]}\,/\,\mathbb{E}[\lambda(\psi_A)_k^{\downarrow}]$ of the $k$-th largest eigenvalue for $k=1,\dots,n$. Here $n=1024$. 
	}
	\label{fig:moments}
\end{figure}
The plots show that the main difference between the balanced and the unbalanced cases are in the fluctuations of the smallest eigenvalues, in particular the minimum eigenvalue $\lambda(\psi_A)_n^{\downarrow}$. This is somehow expected. 
Indeed, for large $n$, the rescaled spectrum  $n\lambda(\psi_A)$ concentrates on a bounded interval $[a,b]\subset\R_+$, and has a continuous limit density known as Mar\v{c}enko-Pastur distribution
\be
\rho_c(x)=\frac{c}{2\pi x}\sqrt{(x-a)(b-x)}1_{[a.b]}(x),
\label{eq:MP_distribution}
\ee
where $a=(1-1/\sqrt{c})^2$ and $b=(1+1/\sqrt{c})^2$. When $c>1$, the support of this density is bounded away from zero ($a>0$) and vanishes at the lower edge as $(x-a)^{1/2}$ (soft edge); for $c=1$, the support of the Mar\v{c}enko-Pastur distribution contains the origin and diverges at $a=0$ as $x^{-1/2}$ (hard edge). See \figurename{ \ref  {fig:Marcenko-Pastur}}.

\begin{figure}\centering
	\includegraphics[width=.49\textwidth]{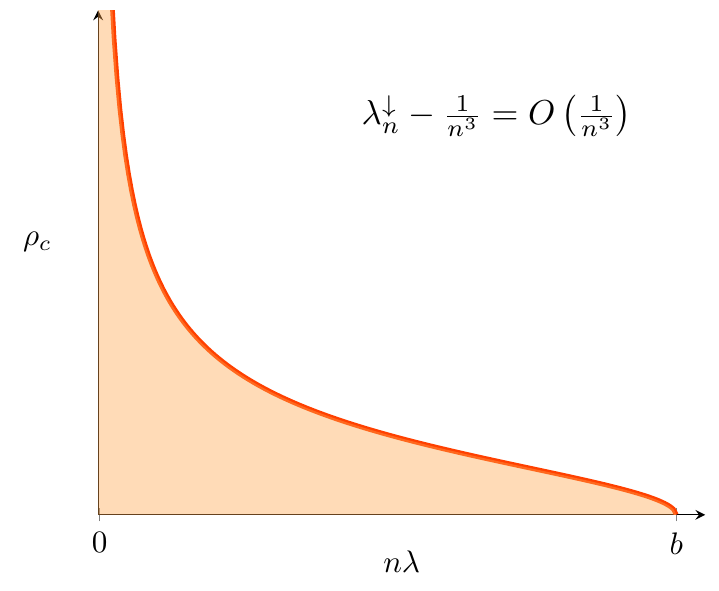}
	\includegraphics[width=.49\textwidth]{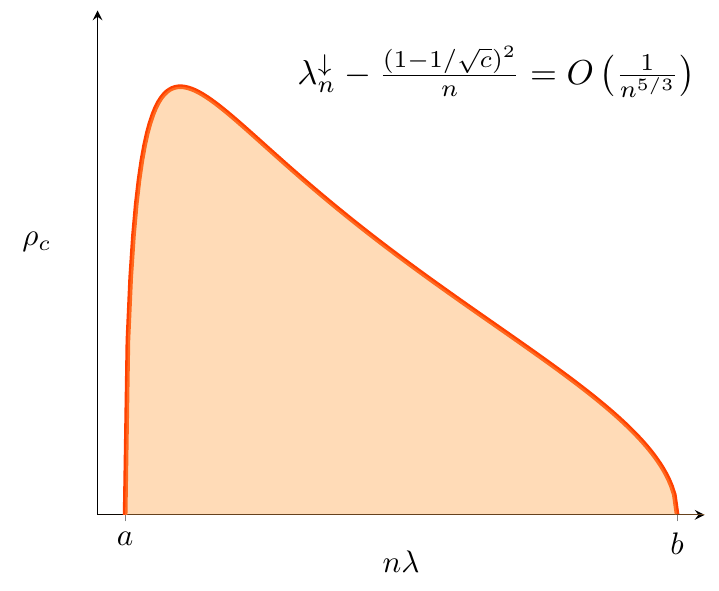}
	\caption{Mar\v{c}enko-Pastur distribution \eqref{eq:MP_distribution} with hard-edge for $c=1$ (left) and soft edge for $c>1$ (right)}
	\label{fig:Marcenko-Pastur}
\end{figure}

For large $n$, the fluctuations of the minimum eigenvalue $\lambda(\psi_A)_n^{\downarrow}$ at the soft or hard edges are
	\begin{equation}
	\frac{\sqrt{\operatorname{Var}[\lambda(\psi_A)_n^{\downarrow}]}}{\mathbb{E}[\lambda(\psi_A)_n^{\downarrow}]}\sim 
	\begin{cases}
	1 &\text{if $c=1$ (hard edge)},\\
	\dfrac{1}{c^{1/6}|1-\sqrt{c}|^{2/3}}n^{-2/3} &\text{if $c>1$ (soft edge)}.
	\end{cases}
	\label{eq:rel_flut}
	\end{equation}
Formula~\eqref{eq:rel_flut} can be obtained by combining previously known results. See~\ref{app:2} for the details.
	
The fluctuations of the smallest eigenvalue are comparable with its mean in the balanced case, suggesting that the ratio
\begin{equation}\label{eq:argminProbVidal}
\frac{\sum_{j=k}^n\lambda_j(\psi_A)^\downarrow}{\sum_{j=k}^n \lambda_j(\varphi_A)^\downarrow},
\end{equation}
takes values significantly smaller than one with a non-negligible probability for $k=n$ (the ratio between the smallest eigenvalues);  the same must happen to $\Pi(\lambda(\psi_A),\lambda(\varphi_A))$, since it is the minimum of~\eqref{eq:argminProbVidal} over  $1\leq k\leq n$. 
On the contrary, for unbalanced bipartitions the relative deviation of the {smallest} eigenvalue  vanishes for large $n$, and as shown in \figurename{ \ref  {fig:moments}}(a) the same happens for all the other eigenvalues. This implies that the ratio~\eqref{eq:argminProbVidal} is typically nearly equal to $1$ for every $1\leq k\leq n$. 

As a final test of our conjectural relation between $\Pi(\lambda(\psi_A),\lambda(\varphi_A))$ and the fluctuations of the minimum eigenvalues $\lambda(\psi_A)_n^{\downarrow}$, $\lambda(\varphi_A)_n^{\downarrow}$, we ask whether the distribution function $F(p)$ has a scaling limit driven by the fluctuations~\eqref{eq:rel_flut}. Specifically, we established that 
\be
\lim_{\substack{n,m\to\infty\\c=m/n \text{ fixed}}}P\bigl(\Pi(\lambda(\psi_A),\lambda(\varphi_A))\leq p\bigr)=\theta(1-p).
\ee
Is it possible to rescale the random variable $\Pi(\lambda(\psi_A),\lambda(\varphi_A))$ in order to obtain a nontrivial scaling limit? In particular, we would like to centre the variable around its limit value $1$ and scale by its typical fluctuations. The previous discussion suggests the rescaling
\begin{multline}
\widetilde{\Pi}(\lambda(\psi_A),\lambda(\varphi_A))=
\frac{\mathbb{E}[\lambda(\psi_A)_n^{\downarrow}]}{\sqrt{\operatorname{Var}[\lambda(\psi_A)_n^{\downarrow}]}}
\bigl(1-\Pi(\lambda(\psi_A),\lambda(\varphi_A))\bigr)\\
\sim 
	\begin{cases}
	\bigl(1-\Pi(\lambda(\psi_A),\lambda(\varphi_A))\bigr) &\text{if $c=1$ (hard edge)}\\
	c^{1/6}|1-\sqrt{c}|^{2/3}n^{2/3}\bigl(1-\Pi(\lambda(\psi_A),\lambda(\varphi_A))\bigr) &\text{if $c>1$ (soft edge)}
	\end{cases}.
	\label{eq:rel_flut_scal}
\end{multline}

It turns out that the rescaled variable $\widetilde{\Pi}$ has a nontrivial limit in distribution. The results of the numerical computation are rather convincing and are summarised in \figurename{ \ref  {fig:VidalRescaledintro}}. This is the content of~\eqref{eq:scaling_unb}--\eqref{eq:scaling_bal} of item (vi) in the Introduction.   A precise description of the limit distributions $H_{\mathrm{bal}}$ and $H_{\mathrm{unb}}$ of $\widetilde{\Pi}$ seems quite challenging.

\section{Conclusions and Outlook}\label{sec:conclusion}
In this paper we have studied the probability of LOCC-convertibility between pairs of random pure states. We have {found strong numerical evidence}
that the proportion of LOCC-convertible states of a bipartite system $AB$ vanishes in the asymptotic limit of large dimensional local Hilbert spaces, thus supporting Nielsen's conjecture. By mapping the problem to the persistence probability of a discrete-time continuous-space random walk, we argued that the probability $P\left(\ket{\psi}\to\ket{\varphi}\right)$ of successful LOCC-conversions decays algebraically $\simeq1/n^{\theta}$, and we gave precise estimates for the persistence exponents $\theta$. It turns out that the value of $\theta$ depends on whether the bipartition is balanced $m\sim n$ ($\theta\simeq 4/5$) or unbalanced $m\sim cn$ with $c>1$ ($\theta\simeq 2/5$). Recall that the problem of LOCC-convertibility is symmetric under exchange of parties $A$ and $B$, and the special value $c=1$ is the fixed point of the symmetry $c\leftrightarrow c^{-1}$. The sharp transition from $c=1$ to $c>1$ can also be  observed in the full distribution $F(p)$ of $\Pi\left(\psi \rightarrow \varphi \right)$. Surprisingly enough, for unbalanced bipartitions ($c>1$), although the probability of successful LOCC-conversion between random states vanishes asymptotically, the overwhelming majority of pairs can be LOCC-converted if any fixed arbitrary small probability of error is allowed. Random matrix theory turns out to be a useful tool to explain this phenomenon and elaborate on it. The main difference between the two regimes is that the density of eigenvalues $\rho_c(x)$ in~\eqref{eq:MP_distribution} accumulates at zero for $c=1$, while it has support bounded away from zero for $c>1$. The fluctuations of the minimum eigenvalues have different scales in the two cases, and we showed that those fluctuations drive the scaling limit of $F(p)$. See Figs.~\ref{fig:Rescaling_bal}--\ref{fig:Rescaling_unbal}. 
\par
Mapping the majorization relation between random vectors to persistence probabilities of random walks suggests that Nielsen's conjecture is an instance of a more general phenomenon. We conjecture that  if $x,y$ are independently distributed from a continuous and permutation invariant distribution on $\Delta_{n-1}$, then $P(x\prec y)\to0$ as $n\to\infty$. One could also extend the problem to discrete measures on the simplex; in fact, while completing this work we learned that an equivalent problem for the dominance order of random integer partitions is stated in Macdonald's book~\cite[Chap. 1, Ex. 18]{Macdonald}. Preliminary investigations indicate that the general conjecture holds (certainly for most `natural' continuous and discrete measures on the simplex), but a more careful analysis is left to future works.
\par
The key numerical ingredient that made possible our analysis is an adaptation of the tridiagonal realisation of the Laguerre unitary ensemble. It is worth emphasising that the general version of the tridiagonal method of Dumitriu and Edelman provides a way to sample $n$-tuples for the Laguerre $\beta$-ensemble for generic $\beta>0$. In particular, one can easily adapt the method explained in Section~\ref{sec:numericalmethod} to investigate the set of random \emph{real} pure states (corresponding to $\beta=1$),  a set that at first seems artificial, but which is in fact quite realistic for many situations (see~\cite{Vivo10,Wooters90,zyczkowski03,zyczkowski2011,zyczkowskibeng}). In that case we expect a similar behaviour as for complex random states, but the persistence exponents might change.
\par
Another possible direction to extend this work is to investigate the convertibility of random states for other quantum resource theories (different from entanglement theory) where the majorization relation provides a necessary and sufficient condition. Perhaps the first theory that one can successfully explore is the `resource theory of coherence' developed in~\cite{Baumgratz14,Du15,Winter16}. 
\par
We conclude with a final remark on the ordered sums of the eigenvalues of random matrices. Natural `observables' in study of spectral properties of random matrices are symmetric under permutations of the eigenvalues (these include moments, correlation functions, averages of characteristic polynomials, etc.). Therefore, the ordered sums $\lambda^{\downarrow}_1+\dots+\lambda^{\downarrow}_k$ might seem rather exotic for a random matrix theorist. Note however that, for the Laguerre unitary ensemble, the ordered sums $\lambda^{\downarrow}_1+\dots+\lambda^{\downarrow}_k$ are the outputs of the Robinson-Schensted-Knuth algorithm applied to an array of i.i.d. exponential waiting times~\cite{Borodin08,Johansson00,O'Connell14}. Therefore, as a  consequence of Greene's theorem~\cite{Greene74}, the ordered sums have a very explicit and direct relation to last passage percolation models. (Related quantities, dubbed `truncated linear statistics', were recently studied in~\cite{Grabsch17}.)  There is no space here to elaborate on the representation theory behind this correspondence; the mere purpose of this remark is to stress the relevance of ordered sums of eigenvalues of random matrices.

\ack
The research of FDC is supported by ERC Advanced Grant 669306.
PF, GF and GG are partially supported by Istituto Nazionale di Fisica Nucleare (INFN) through the project ``QUANTUM".  FDC, PF, GF and GG are partially supported by the Italian National Group of Mathematical Physics (GNFM-INdAM). GF is supported by the Italian Ministry MIUR-PRIN project “Mathematics of active materials: From mechanobiology to smart devices” and by the FFABR research grant.

\appendix

\section{Some background on LOCC}\label{appLOCC}
Despite having a quite simple physical description, the precise mathematical characterization of LOCC is notoriously challenging~\cite{Bennet99,DHR02}, due to the fact that the particular choice of local measurements to perform at some stage of a LOCC protocol depends on all the measurement results previously obtained (and shared between the parties). A further complication arises from the fact that the number of communication rounds is potentially unbounded~\cite{Chitambar11}. A detailed mathematical description of the full LOCC class has been presented in~\cite{Chitambar2014} adopting the quantum instruments formalism. A full treatment of the subject is beyond the scope of this work, and the interested reader is referred to~\cite{Chitambar2014} and references therein. The aim of this appendix is to provide, as shortly as possible, some minimal background suitable for the problem studied in this work for the reader who is not familiar with the subject. 

As a matter of fact, the description greatly simplifies in the case at study here. As proved in~\cite{Lo2001}, any LOCC transformation on a pure bipartite state can be performed in a single-step one-way-communications LOCC protocol, i.e. a single generalized measurement performed by one party followed by the communication of the result to the other party and finally by a local unitary transformation based on the measurement outcome.
Thus, the most general LOCC transformation $T$ on a pure bipartite state $\ket{\psi}$ in $\Hi_A\otimes \Hi_B$ can be written as
\begin{equation}\label{eq:genLOCC1}
	T(\psi)=\sum_{\alpha}(M_\alpha\otimes U_\alpha)\psi ({M_\alpha}^\dagger\otimes {U_\alpha}^\dagger),
\end{equation}
where the operators $M_\alpha$, with  $\sum_\alpha M_\alpha^\dagger M_\alpha=\mathbb{I}$,  describe a generalised measurement on party $A$ with outcomes labelled by $\alpha$, while $U_\alpha$ is the unitary performed on party $B$ when the result of the measurement is $\alpha$. Notice that the operation~\eqref{eq:genLOCC1} is indeed a convex combination of several possible results:
\begin{equation}
	T(\psi)=\sum_{\alpha}p_\alpha \rho_\alpha,
\end{equation}
where 
\begin{equation}
p_\alpha=\Tr(M_\alpha^{\dagger}M_\alpha \psi_A), \qquad \rho_\alpha=(M_\alpha\otimes U_\alpha)\psi ({M_\alpha}^\dagger\otimes {U_\alpha}^\dagger)/p_\alpha,
\end{equation}
with $\rho_\alpha$ being the final state when the protocol is completed, given that the outcome $\alpha$ is obtained during the measurement stage. 
Therefore, the probability of success $p$ of the conversion $\ket{\psi}\xrightarrow{p}\ket{\varphi}$ by a LOCC of the form~\eqref{eq:genLOCC1} is given by
\begin{equation}
p(\psi,\varphi;T) = \sum_{\alpha\colon\rho_\alpha = \varphi} p_\alpha.
\end{equation}
Thus, the maximal conversion probability introduced in Definition 1 can be written using this notation as
\begin{equation}
	\Pi(\psi\rightarrow \varphi)=\max_{T\in\LOCC}p(\psi,\varphi;T)
\end{equation}

As pointed out in~\cite{Chitambar2014}, the set of LOCC  is convex but it is not topologically closed, in the sense that there are some convergent sequences of LOCC protocols (e.g. approaching some probability of success $p$ as more rounds of communications are allowed) which are not LOCC-feasible in the limit~\cite{Chitambar12}. However, this does not pose a problem in the definition of the maximal success probability $\Pi(\psi\rightarrow \varphi)$, since the sets of finite-round LOCC protocols (such as the one-round LOCC protocols we are interested here) are closed~\cite{Chitambar2014}.
\section{Local spectra of random pure states}\label{appA}
We collect here a few facts on the local spectra of random pure states (i.e.\ spectra of random density matrices) and their relation to the Laguerre unitary ensemble. A comprehensive reference on the subject is the paper by Nechita~\cite{Nechita08} (that, among other things, contains a collection of exact formulae on moments and precise asymptotic theorems on the largest eigenvalue of random density matrices). 
\subsection{Eigenvalues of Wishart and fixed-trace Wishart matrices}
The basic connection between local spectra of random pure states and Wishart matrices is as follows.
	Let $(x_1,\dots,x_n)$ be a  vector distributed according to the Laguerre unitary ensemble $G_{n,m}$ in~\eqref{eq:jointEigWishart}. Define
	\begin{equation}
		\lambda_{j}=\frac{x_j}{\sum_{k=1}^n x_k}, \qquad j=1,\dots,n.
		\label{eq:rep_lambda_x}
	\end{equation}
Then $(\lambda_1,\dots,\lambda_n)$ is distributed according to $g_{n,m}$ in~\eqref{eq:jointEigDistHaar}. The proof hinges on the change of variable
\begin{equation}
(x_1,\dots,x_n)\mapsto(\lambda_1,\dots,\lambda_{n-1},s)=(x_1/s,\dots,x_{n-1}/s,s),
\end{equation}
where $s=\sum_{k=1}^n x_k$. 
The Jacobian of this transformation is
\begin{equation}
\left\lvert\frac{\partial (x_1,\dots,x_n)}{\partial (\lambda_1,\dots,\lambda_{n-1},s)}\right\rvert=
\begin{vmatrix}
s & 0 &\dots  & \lambda_1\\
0   & s  &\dots & \lambda_2\\
\vdots &\vdots &\ddots & \vdots\\
-s & -s & \dots & 1-\sum_{k=1}^{n-1}\lambda_k
\end{vmatrix}=
\begin{vmatrix}
s & 0 &\dots  & \lambda_1\\
0   & \ddots  &\dots & \lambda_2\\
\vdots &\vdots &s & \vdots\\
0 & 0 & \dots & 1
\end{vmatrix}=s^{n-1},
\end{equation}
where in the second equality we have added to the last row the first $n-1$ rows, an operation that does not change the determinant. If we denote $\lambda_n=x_n/s=1-\sum_{k=1}^{n-1}\lambda_k$, we see that the probability density function of $(\lambda_1,\dots,\lambda_{n-1},s)$ factorises
\begin{align}\notag
\widetilde{G}_{n,m}(\lambda_1,\dots,\lambda_{n-1},s)
&=C_{n,m} e^{-\frac{s}{2}} s^{n-1}\prod_{1\leqslant i< j \leqslant n} (s\lambda_i-s\lambda_j)^2  \prod_{k=1}^n (s\lambda_k)^{m-n}  \\
&=C_{n,m}  e^{-\frac{s}{2}}s^{nm-1} \left(\prod_{1\leqslant i< j \leqslant n} (\lambda_i-\lambda_j)^2 \prod_{k=1}^n \lambda_k^{m-n}\right)
\end{align}
 Integrating over $s$ we obtain the distribution of the vector $(\lambda_1,\dots,\lambda_{n-1})$. The integral gives the constant
 \be
 C_{n,m} \left(\int_{0}^\infty e^{-\frac{s}{2}}s^{nm-1}\ud s\right) =C_{n,m} 2^{nm} \Gamma(nm)=c_{n,m}.
 \ee
 Therefore, the $n$-tuple $(\lambda_1,\dots,\lambda_{n-1},\lambda_n)$, with  $\lambda_n=1-\sum_{k=1}^{n-1}\lambda_k$ has density 
	\begin{align}
	g_{n,m}(\lambda_1,\dots,\lambda_{n})&=c_{n,m} \prod_{1\leqslant i< j \leqslant n} (\lambda_i-\lambda_j)^2 \prod_{k=1}^n \lambda_k^{m-n}1_{\Delta_{n-1}}(\lambda).
	\end{align}
	with respect to the Lebesgue measure on the simplex.
\subsection{Smallest eigenvalue of fixed-trace Wishart matrices}
\label{app:2}
Majumdar, Bohigas and Lakshminarayan~\cite{majumdar2008exact} computed the distribution and moments of the smallest eigenvalue $\lambda_n^{\downarrow}$ of complex fixed-trace Wishart matrices in terms of elementary functions in the case $m=n$. Let $(\lambda_1,\dots,\lambda_n)$ be distributed according to $g_{n,n}$ in~\eqref{eq:jointEigDistHaar}. Then, $\lambda_{n}^{\downarrow}$ has probability density
\be
f_{\operatorname{min}}(x)=n(n^2-1)(1-nx)^{n^2-2}\theta(1/n-x).
\label{eq:MBL}
\ee
The $k$-th moment is given by
\be
\mathbb{E}\bigl[\left(\lambda_{n}^{\downarrow}\right)^k\bigr]=\frac{\Gamma(n^2)\Gamma(k+1)}{n^k\Gamma(n^2+k)}.
\ee
In particular, 
\be
\mathbb{E}[\lambda_{n}^{\downarrow}]=1/n^3,\quad \text{and}\quad  \operatorname{Var}[\lambda_{n}^{\downarrow}]=\frac{1}{n^6}\left(\frac{n^2-1}{n^2+1}\right).
\ee

From the explicit formula~\eqref{eq:MBL} it is easy to see that the rescaled variable  $\lambda_n^{\downarrow}/n^3$ converges in distribution to an exponential variable with rate $1$:
\be
\lim_{n\to\infty}P\left(\lambda_{n}^{\downarrow}\geq x n^3\right)=e^{-x}.
\ee 
This is consistent with earlier results by Edelman \cite[Theorem 3.2]{Edelman88} on Wishart matrices (without fixed trace).

The distribution of $\lambda_n^{\downarrow}$  for $m>n$ does not seem to have been worked out to a similar degree of detail.  One can however infer asymptotic formulae  from known results on Wishart matrices without fixed trace. 
Let $(x_1,\dots,x_n)$ be distributed according to $G_{n,m}$ in~\eqref{eq:jointEigWishart}, where $m=cn$ with $c>1$. Set $a=(1-\sqrt{c})^2$ and $b=|1-\sqrt{c}|^{4/3}/c^{1/6}$. Then, the rescaled variable $bn^{-1/3}(x_{n}^{\downarrow}-an)$ converges in distribution to a $\beta=2$ Tracy-Widom random variable:
\be
\lim_{n\to\infty}P\left(\frac{x_{n}^{\downarrow}-na}{bn^{1/3 }}\leq x\right)=F_{2}(x).
\label{eq:TW}
\ee 
(See~\cite{perret2016finite} for more details). On the other hand $\sum_{k=1}^nx_k/cn^2$ converges in probability to $1$ by the law of large numbers. Hence, using the representation~\eqref{eq:rep_lambda_x}, we see that $\lambda_{n}^{\downarrow}$ has mean $\Or(n^{-1})$ and fluctuation $\Or(n^{-5/3})$ in the asymptotic regime $n\to\infty$.  

%--------------------
%	BIBLIOGRAPHY
%--------------------

\section*{References}

\end{document}